\newtheorem{observation}{Observation}
\crefname{observation}{Observation}{Observations}
\Crefname{observation}{Observation}{Observations}
\title{Tight Bounds for Constant-Round Domination on Graphs of High Girth and Low Expansion}
\titlerunning{Constant-Round Domination with High Girth and Low Expansion}
\begin{document}
\author{Christoph Lenzen and Sophie Wenning}
\authorrunning{C. Lenzen and S. Wenning}
\institute{CISPA Helmholtz Center for Information Security \email{\{lenzen,sophie.wenning\}@cispa.de}}
\date{}
\maketitle

\begin{abstract}
A long-standing open question is which graph class is the most general one permitting constant-time constant-factor approximations for dominating sets.
The approximation ratio has been bounded by increasingly general parameters such as genus, arboricity, or expansion of the input graph.
Amiri and Wiederhake considered $k$-hop domination in graphs of bounded $k$-hop expansion and girth at least $4k+3$~\cite{amiri22distributed};
the $k$-hop expansion $f(k)$ of a graph family denotes the maximum ratio of edges to nodes that can be achieved by contracting disjoint subgraphs of radius $k$ and deleting nodes.
In this setting, these authors to obtain a simple $O(k)$-round algorithm achieving approximation ratio $\Theta(kf(k))$.

In this work, we study the same setting but derive tight bounds:
\begin{itemize}
  \item A $\Theta(kf(k))$-approximation is possible in $k$, but not $k-1$ rounds.
  \item In $3k$ rounds an $O(k+f(k)^{k/(k+1)})$-approximation can be achieved.
  \item No constant-round deterministic algorithm can achieve approximation ratio $o(k+f(k)^{k/(k+1)})$.
\end{itemize}
Our upper bounds hold in the port numbering model with small messages, while the lower bounds apply to local algorithms, i.e., with arbitrary message size and unique identifiers.
This means that the constant-time approximation ratio can be \emph{sublinear} in the edge density of the graph, in a graph class which does not allow a constant approximation.
This begs the question whether this is an artefact of the restriction to high girth or can be extended to all graphs of $k$-hop expansion $f(k)$.
\end{abstract}

\section{Introduction}

Given a graph $G=(V,E)$, a $k$-hop Minimum Dominating Set (MDS) $M\subseteq V$ minimizes $|M|$ under the constraint that all nodes are within distance at most $k$ of a node in $M$.
The classic case of $k=1$ has been well-studied in the distributed setting, resulting in constant-time constant-factor approximations for a variety of sparse graph families:
\begin{inparaenum}[(i)]
  \item outerplanar~\cite{bonamy21tight},
  \item planar~\cite{hilke14local,lenzen13distributed,wawrzyniak14strengthened,wawrzyniak15local},
  \item bounded genus~\cite{amiri19distributed,czygrinow19distributed},
  \item excluded $K_t$ minor~\cite{czygrinow18distributed},
  \item excluded $K_{2,t}$ minor~\cite{czygrinow22distributed},\footnote{This work addresses $k$-hop domination. Moreover, the authors point out that the result can be generalized to require only exclusion of shallow $K_{2,t}$-minors.} and
  \item bounded $1$-hop expansion\footnote{A graph has $r$-hop expansion $f(r)$, if deleting nodes and contracting disjoint subgraphs of radius at most $r$ results in ratio of at most $f(r)$ between edges and nodes. Contracting a subgraph means to replace it by a single node sharing an each with each other node that was adjacent to a node of the subgraph.}~\cite{kublenz21constant}.
\end{inparaenum}
All of these algorithms share the property that their approximation ratio is at least linear in the edge density of the input graph (family).

The same applies to a recent result by Amiri and Wiederhake~\cite{amiri22distributed}, who consider the more general setting of $k$-hop domination, but restrict the input graphs to not only have bounded $k$-hop expansion, but also girth at least $4k+3$; the girth of a graph is the length of a shortest cycle.
They show how to achieve an approximation ratio of $O(k f(k))$ in $3k$ rounds.

In this work, we provide a number of results for $k$-hop domination on such graphs that are tight in various ways.
Most prominently, for any fixed $k$, we provide a constant-time algorithm with approximation ratio \emph{sublinear} in $f(k)$;
surprisingly, a matching lower bound shows that there is no deterministic constant-time approximation independent of $f(k)$.

\paragraph*{Detailed Contribution.}
All our results are presented in the port numbering model.
Concretely, the network is represented by a connected graph $G=(V,E)$ of $k$-hop expansion at most $f(k)$ and girth at least $4k+3$.
Each node $v \in V$ initially only knows its incident edges, locally labeled by $1,2,\ldots,\delta(v)$, where $\delta(v)$ is the degree of $v$.
Computation proceeds in \emph{synchronous} rounds, where in each round, $v\in V$ sends messages to neighbors, receives the messages of its neighbors (being aware of the port number of the connecting edge), and performs arbitrary deterministic local computations.
The running time of an algorithm is the maximum number of rounds until all nodes terminate and output whether they belong to the selected $k$-hop dominating set.

Our algorithms use messages of size at most $\lceil\log \Delta\rceil$, where $\Delta$ is the maximum degree of a node.
Our lower bounds extend to the Local Model, in which nodes have unique identifiers and messages unbounded size, which follows from known results~\cite{goeoes13lower}.
However, they do not hold when randomization is available.

As a warm-up result, we provide a simpler and faster algorithm for achieving the approximation ratio of $O(k f(k))$ given in~\cite{amiri22distributed}.
It runs for precisely $k$ rounds and sends only empty messages.
In each round, the algorithm logically deletes nodes of degree $1$ (checking for the special case that all nodes are deleted);
all nodes that remain are in the dominating set.
The fact that this procedure results in approximation ratio $O(kf(k))$ is implicit in Wiederhake's thesis~\cite{wiederhake22pulse}.
\begin{theorem}\label{thm:kround}
\Cref{alg:kround} runs for $k$ rounds, sends at most one empty message in each direction over each edge, and returns a $k$-hop dominating set that is at most by factor $2kf(k)+1$ larger than the optimum.
\end{theorem}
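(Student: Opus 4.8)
\textit{Plan.} I would prove the three assertions separately; the first two are routine and the approximation ratio is the substantive part.

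\textit{Running time, messages, and domination.} Implement the peeling so that every vertex maintains its current degree (initially $\delta(v)$): in the unique round in which this value equals $1$, the vertex declares itself removed and sends one empty message along each incident edge, and a vertex that receives such a message decrements its degree. Since a vertex transmits only in the round it is removed, each edge carries at most one empty message per direction, and after exactly $k$ rounds the vertices still present are output as dominators. The only wrinkle is a component that has shrunk to a single edge (two vertices of degree $1$): a vertex that in the same round both is removed and receives a removal message from its unique remaining neighbour recognises this and re-joins, as does that neighbour. For domination, assign to each vertex $v$ removed in round $r(v)\le k$ its \emph{parent}, the unique neighbour it still has when removed; unless $v$ lies on such a collapsing edge one checks $r(\mathrm{parent}(v))>r(v)$, so round numbers strictly increase along a chain of parents, and after at most $r(v)\le k$ steps one reaches a retained vertex. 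Hence every removed vertex is within distance $k$ of a retained one, and retained vertices dominate themselves.

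\textit{Set-up for the approximation.} Let $S$ be the output set, fix an optimum $k$-hop dominating set $M^*$, and argue inside each connected component of $G$ (projecting $M^*$ to it). Assign every vertex to a nearest vertex of $M^*$ (any fixed tie-breaking), obtaining a partition of $V$ into BFS trees $\{T_m\}_{m\in M^*}$, each of radius at most $k$. Girth $\ge 4k+3$ yields two structural facts: $G[V(T_m)]$ equals the tree $T_m$ (a chord would close a cycle of length $\le 2k+1$), and at most one edge of $G$ joins any two distinct trees (two such edges would close a cycle of length $\le 4k+2$). Consequently, contracting the trees produces a simple minor on $|M^*|$ vertices, so there are at most $f(k)\,|M^*|$ inter-tree edges, and the set $B$ of their endpoints satisfies $|B|\le 2f(k)\,|M^*|$.

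\textit{Core lemma and the count.} The key step I would establish is: every retained vertex $s$ whose assigned $M^*$-vertex $\phi(s)$ is not $s$ has a vertex of $B$ in the subtree $T_s$ of its tree rooted at $s$. For if $T_s$ contained no vertex of $B$, then by the first structural fact $T_s$ is incident to no edge besides its own tree edges and the single edge from $s$ to its parent, so it peels away from the leaves upward within $k-\mathrm{depth}(s)\le k-1$ rounds; thereafter $s$ has degree at most $1$ and is either removed within $k$ rounds — contradicting $s\in S$ — or already isolated in the $k$-round graph, which forces $s$ to be a singleton component there. Granting the lemma, pick for each such $s$ a witness $b(s)\in B\cap T_s$; then $s$ lies on the path in its tree from $b(s)$ up to the root $\phi(b(s))$ and is not the root, so the retained vertices outside $M^*$ are covered by the vertex sets of these root-paths with their roots removed. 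Each such path has at most $\mathrm{depth}(b(s))\le k$ non-root vertices, giving at most $k\,|B|\le 2kf(k)\,|M^*|$ retained vertices outside $M^*$, while at most $|M^*|$ retained vertices lie in $M^*$; summing over components gives $|S|\le(2kf(k)+1)\,|M^*|$.

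\textit{Main obstacle.} The hard part is the core lemma, together with the degenerate cases it does not cover: a component whose post-peeling remnant is a single vertex (contributing $1\le\mathrm{opt}$ for that component) and a component collapsing to a single edge, where the algorithm keeps two vertices; the latter stays within the bound because the mere existence of an edge forces $f(k)\ge 1/2$, hence $2kf(k)+1\ge k+1\ge 2$. A milder point is hitting the exact constant $2kf(k)+1$: this comes precisely from charging retained vertices top-down along the root-paths of their witnesses, whereas the more obvious route of bounding the leaves of the induced survivor-subforests loses a constant factor.
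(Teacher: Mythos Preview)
Your argument is correct and follows the same high-level strategy as the paper: Voronoi-partition with respect to an optimum $k$-hop dominating set, use girth to make the cells induced trees with at most one edge to any other cell, contract to bound inter-cell edges via $f(k)$, and charge surviving vertices to root-paths of inter-cell endpoints to hit the constant $2kf(k)+1$. The degenerate case $V=\emptyset$ and the bound $f(k)\ge 1/2$ are handled just as in the paper.

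There is one genuine difference in setup worth noting. The paper first replaces the optimum by a $k$-hop MDS $M\subseteq V$ lying entirely inside the pruned set, with the additional property that every $v\in V$ has a dominator within distance $k-r_v$ (their Lemma on the structure of $M$). The Voronoi decomposition is then taken inside the \emph{pruned} graph, and the key structural fact becomes ``every non-root leaf of a Voronoi tree has an inter-tree edge,'' which is deduced from the $r_v$ property rather than from analysing the peeling directly. You instead keep the original $M^*$ (which may lie outside the surviving set), decompose the \emph{input} graph, and prove your core lemma by tracking the peeling dynamics on the subtree $T_s$. Both routes yield the identical bound; yours avoids the auxiliary step of pushing $M^*$ into $V$ at the price of reasoning about the pruning process explicitly, while the paper's detour via $M\subseteq V$ makes the leaf-has-outgoing-edge statement fall out of the distance condition without re-running the peeling. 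One small thing to tighten in your write-up: the ``already isolated'' alternative in your core lemma cannot actually occur when $V\neq\emptyset$ (pruning keeps the surviving subgraph connected, so a degree-$0$ survivor would force $|V|=1$, and then your peeling argument shows that this single survivor must lie in $M^*$), so you can drop that clause rather than leave it dangling.
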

By considering a selection of trees of depth $k$, we establish that $k$ rounds are necessary to achieve an approximation ratio that is independent of the number of nodes $n$ and the maximum degree $\Delta$;
trees satisfy that $f(k)<1$ for any $k$.
\begin{theorem}\label{thm:kroundlower}
No $(k-1)$-round algorithm can an approximation ratio smaller than $\Delta$, even if the input graph is guaranteed to be a tree.
\end{theorem}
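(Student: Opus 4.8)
The plan is to combine indistinguishability with a symmetry argument over a family of trees of depth $k$. The centrepiece is the spider $S_\Delta$: a hub $c$ with $\Delta$ legs, each a path of length $k$. Every node lies within distance $k$ of $c$, so $\mathrm{OPT}=1$, while the maximum degree is $\Delta$; being a tree it has girth $\ge 4k+3$ and $k$-hop expansion $<1$, hence is an admissible input. First I would fix the local port numbering so that the $\Delta$ legs are pairwise identical; then any deterministic $(k-1)$-round algorithm produces the same output at the $j$-th node of every leg. Since the $(k-1)$-ball of a leaf (which is at distance $k$ from $c$) does not reach $c$, the algorithm's output on $S_\Delta$ is completely described by a set $J\subseteq\{1,\dots,k\}$ of leg positions it selects in every leg, together with one bit for whether it selects $c$.

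Next I would use validity: to dominate the leaf of a leg there must be a selected node within distance $k$ of it, and the only candidates are the $k$ nodes of that leg together with $c$. Hence either $J\neq\emptyset$ — in which case the output already contains at least $\Delta$ nodes and, as $\mathrm{OPT}=1$, the ratio is $\ge\Delta$ — or $J=\emptyset$ and $c$ is selected, i.e.\ the algorithm outputs exactly $\{c\}$. It remains to exclude this single ``clever'' behaviour. For this I would enlarge the family by further trees of depth $k$ whose relevant $(k-1)$-neighbourhoods coincide with neighbourhoods occurring in $S_\Delta$: variants in which the hub still sees exactly the view $\sigma_c$ it has in $S_\Delta$ but is no longer a valid single-node dominator, together with variants pinning down the decision on the leg-position views. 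Because the decision of a $(k-1)$-round algorithm at a node depends only on that node's $(k-1)$-ball, the behaviour ``$J=\emptyset$, pick $c$'' on $S_\Delta$ propagates to these variants, where it either violates validity or forces the selection of a leg-position view after all, contradicting $J=\emptyset$. Either way some tree in the family witnesses ratio $\ge\Delta$. (The interesting regime is large $\Delta$; for $\Delta\le 2k+1$ a sufficiently long path already forces ratio $\ge 2k+1\ge\Delta$ by the standard periodicity argument.)

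I expect the construction and analysis of the auxiliary trees in this last step to be the main obstacle: one must push the algorithm off the strategy ``output only the hub'' without inflating the optimum of the auxiliary tree, which is delicate since any instance that forces a selection ``for validity reasons'' tends to raise $\mathrm{OPT}$ in lockstep; the resolution has to exploit the rigidity of the $(k-1)$-view and the geometry of $k$-hop balls in depth-$k$ trees. Finally, since the whole argument is by indistinguishability, the bound carries over to the LOCAL model with unique identifiers and unbounded messages by the known lifting of such lower bounds (cf.~\cite{goeoes13lower}); it does not survive randomisation.
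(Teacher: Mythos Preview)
Your symmetry claim ``any deterministic $(k-1)$-round algorithm produces the same output at the $j$-th node of every leg'' is the crux, and it is not correct for $j<k$. The hub $c$ must assign \emph{distinct} ports $1,\ldots,\Delta$ to its incident edges, and a node at position $j\le k-1$ on leg $i$ reaches $c$ within $k-1$ hops; its $(k-1)$-view therefore records the port at $c$ on the edge back toward it, namely $i$. So the port-labelled $(k-1)$-views of position-$j$ nodes on different legs are \emph{not} isomorphic, and a port-numbering algorithm may legitimately output differently on different legs (e.g.\ ``select iff the hub-port on my leg is $1$''). Only the leaf at position $k$ truly has a leg-independent view, which is not enough: the algorithm could decline at all leaves and select a single non-leaf node per input without this contradicting leaf-indistinguishability.

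Your plan for the residual case ``$J=\emptyset$, select $c$'' inherits the same difficulty and is not yet a proof. If you build an auxiliary tree where a node $c'$ has the same $(k-1)$-view as $c$ but $\{c'\}$ fails to dominate, validity forces \emph{some} further node to be selected---but that node's $(k-1)$-view need not occur anywhere in $S_\Delta$, so nothing transfers back to $S_\Delta$, and nothing pins the approximation ratio of the auxiliary tree either. The paper sidesteps both issues by a different construction: instead of thin legs it uses a properly $\Delta$-edge-coloured $(\Delta-1)$-ary tree $T_i$ of depth $k-1$ for each colour $i$, and compares two families of instances: (a) for each $i$, two copies of $T_i$ joined at their roots by a colour-$i$ edge, and (b) one copy of each $T_i$ attached to a common root. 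Every non-leaf has degree $\Delta$ with port $=$ colour, so the $(k-1)$-view of any node of $T_i$ is identical in (a) and (b); since \emph{some} node must be selected in each instance of type (a), the corresponding node is selected in the copy of $T_i$ inside (b), yielding at least $\Delta$ selected nodes against $\mathrm{OPT}=1$. Note that the indistinguishability is \emph{across instances}, not across legs within one instance, which is exactly what your spider cannot provide.
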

We remark that the fact that this lower bound is shown by invoking a family of trees is no coincidence.
In fact, we show that the task becomes easier when excluding trees, as this eliminates the special case that the above simple procedure logically deletes all nodes.
\begin{theorem}\label{thm:k-1}
\Cref{alg:k-1} runs for $k-1$ rounds and sends at most one empty message over each edge.
If the input graph is further constrained to not be a tree, it returns a $k$-hop dominating set that is at most by factor $2kf(k)+1$ larger than the optimum.
\end{theorem}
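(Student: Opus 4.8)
The plan is to show that, even though \Cref{alg:k-1} uses one communication round fewer than \Cref{alg:kround}, on a non-tree it commits exactly the same vertex set to the dominating set, namely $G_k$, the set of vertices that survive $k$ iterated deletions of vertices of degree at most~$1$. The point is that the $k$-th deletion step needs no extra round: during the $k-1$ simulated rounds every vertex that gets deleted sends a single empty message along its then-unique incident edge, so every vertex $v$ still present after round $k-1$ has been informed of all of its deleted neighbours and therefore knows $\deg_{G_{k-1}}(v)$; it can then decide \emph{locally} whether it would be deleted in a hypothetical round $k$, i.e.\ whether $\deg_{G_{k-1}}(v)\le 1$. This is also exactly where excluding trees enters: the only case in which \Cref{alg:kround} cannot take this shortcut is when every vertex is eventually deleted, which happens precisely for trees. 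Given this, the round bound is immediate, and the message bound follows because a deleted vertex transmits at most once and, if $u$ transmits to $v$, then $v$ is still alive and --- if ever deleted later --- transmits to a neighbour other than the already-removed $u$, so no edge carries two messages.

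I would then verify that $G_k$ is a legal $k$-hop dominating set. It is non-empty: every cycle of $G$ lies in the $2$-core and hence in $G_k$, and a connected non-tree has a cycle. For the covering property, note that if $v\notin G_k$ then $v$ is removed in some round $i\le k$ of the (possibly virtual) peeling, and in $G_{i-1}$ it then has exactly one surviving neighbour $w$; it cannot have none, since otherwise $\{v\}$ --- and one step earlier $\{v,w\}$ --- would be a connected component of $G_{i-1}$, forcing the $2$-core to be empty and $G$ to be a tree, using that peeling preserves the connectivity of a connected graph. Either $w\in G_k$ or $w$ is removed in a strictly later round; iterating, one obtains a simple path from $v$ into $G_k$ (simple because earlier vertices of the chain are already deleted), and since the removal rounds along it strictly increase and never exceed $k$, the path has length at most $k-i+1\le k$.

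For the approximation ratio I would invoke \Cref{thm:kround}: on a non-tree its special case does not trigger, so \Cref{alg:kround} returns exactly $G_k$, whence $|G_k|\le(2kf(k)+1)\cdot\mathrm{opt}$. For completeness I would recall the counting behind that bound. Fix an optimal $k$-hop dominating set $M^*$. Girth at least $4k+3$ forbids cycles of length at most $2k+1$, so the BFS partition of $V$ around the centres of $M^*$ consists of \emph{induced} trees $T_m$ of depth at most $k$ (a chord would close a short cycle), and no two such trees are joined by two edges, so the number of cross edges is at most $f(k)|M^*|$ by the definition of $k$-hop expansion, hence the number of cross-edge endpoints is at most $2f(k)|M^*|$. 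A survivor $v\in T_m$ whose subtree contains no cross-edge endpoint is removed no later than round $h+1$, where $h$ is its height in $T_m$; since $h$ plus the depth of $v$ is at most $k$, this forces $v$ to be the root $m$. Every other survivor is an ancestor within its tree of some cross-edge endpoint $a$, and $a$ has at most $\mathrm{depth}(a)\le k$ ancestors besides the root, so summing gives $|G_k|\le|M^*|+2kf(k)|M^*|$.

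The main difficulty is conceptual and confined to the first step --- seeing that the $k$-th peeling round can be carried out as purely local post-processing, precisely because the empty messages of the first $k-1$ rounds already reveal to each surviving vertex its degree in $G_{k-1}$, and that excluding trees eliminates the one obstruction (all vertices being deleted). The remaining steps reuse the analysis of \Cref{thm:kround} almost verbatim; the only point requiring care is checking, in the covering argument, that in a connected non-tree a removed vertex never becomes isolated before its chain of successors reaches $G_k$.
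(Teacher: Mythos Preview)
Your proposal is correct and follows essentially the same approach as the paper: the paper's proof is simply a three-line invocation of \Cref{obs:nontree}, \Cref{lem:dom}, and \Cref{lem:usefk}, and you reprove the content of these lemmas inline (non-emptiness via a cycle, the covering property via the chain of increasing deletion rounds, and the $(2kf(k)+1)$-bound via the Voronoi decomposition). Your forward reference to \Cref{thm:kround} is not needed, since you in fact supply the underlying counting argument directly.

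One minor deviation worth noting: for the approximation bound the paper first moves the optimum into the pruned set $V$ (via \Cref{lem:dom}) and then does the Voronoi partition of $V$, using that every leaf of a non-singleton cell has an outgoing edge; you instead partition the full vertex set $V'$ around an arbitrary optimum $M^*$ and argue that any survivor other than a root must be an ancestor of a cross-edge endpoint (else it is peeled within $k$ rounds). Both routes yield the same inequality; yours avoids the relocation step, while the paper's version yields the slightly stronger structural statement about leaves that is reused later for \Cref{thm:3k}. A small gap to patch in your message-count argument: you should observe that two endpoints cannot send ``del'' to each other in the \emph{same} round, since that would make $\{u,v\}$ a connected component of the current peeled graph, contradicting the presence of a cycle; your phrasing ``if ever deleted later'' does not cover simultaneous deletion.
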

However, the improvement in running time is limited to a single round.
\begin{theorem}\label{thm:k-1lower}
No $(k-2)$-round algorithm can achieve an approximation ratio smaller than $\Delta$, even when $G$ is guaranteed to not be a tree, have maximum degree $\Delta\in \mathbb{N}$, $k$-hop expansion $f(k) \le 1$, and girth at least $g \le n(\Delta-1)/\Delta^{k+1}$.
\end{theorem}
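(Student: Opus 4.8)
The plan is to reuse the tree-based hard instances behind \Cref{thm:kroundlower} and modify them in two ways. First, we realize the factor-$\Delta$ gap already within $k-2$ rather than $k-1$ rounds. For $k$-hop domination this should be possible because a vertex dominated from $k$ hops away carries one hop of slack more than a probing radius of $k-2$ can resolve; concretely one transplants the construction of \Cref{thm:kroundlower} ``one level shallower'', producing a tree $T$ of maximum degree $\Delta$ on at most $\tfrac{\Delta^{k+1}-1}{\Delta-1}$ vertices with a designated vertex $r$ witnessing a size-$1$ optimum and a port numbering so symmetric that every deterministic algorithm either puts an entire symmetry-orbit of $\Delta$ vertices into its dominating set or none of it, while the only dominating set of size below $\Delta$ is $\{r\}$. (That the round budget cannot be lowered further is consistent with \Cref{thm:k-1}, which solves $T$ in $k-1$ rounds.)

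Second, we assemble $m:=g$ disjoint copies $T_1,\dots,T_g$ of $T$ into a connected non-tree of girth $g$. To each $T_i$ we attach a pendant path of $k$ fresh vertices, anchored at a leaf whose distance to every decision-relevant vertex of $T_i$ exceeds $k-2$ (padding $T$ with an extra, non-symmetric ``handle'' branch for this if needed, so that in particular $r$'s $(k-2)$-hop view is left unchanged); we then add $g$ edges joining the far endpoints of these paths cyclically. The resulting graph $G$ is connected, has maximum degree $\Delta$, is not a tree, and is unicyclic with its unique cycle of length $g$, so its girth equals $g$. Since every minor of a unicyclic graph has first Betti number at most $1$ and hence at most as many edges as vertices, and since deleting vertices and contracting disjoint radius-$\le k$ subgraphs yields a minor of $G$, we obtain $f(k)\le 1$. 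Finally $n=|V(G)|=g\bigl(\tfrac{\Delta^{k+1}-1}{\Delta-1}+k\bigr)\ge g\cdot\tfrac{\Delta^{k+1}}{\Delta-1}$ whenever $\Delta\ge 2$, i.e.\ $g\le n(\Delta-1)/\Delta^{k+1}$, as demanded.

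That the lower bound survives is then immediate. By the distance condition on the pendant paths, whatever a decision-relevant vertex of $T_i$ can reach only through that path — the path's far end, the $g$-cycle, and every other copy — lies outside its $(k-2)$-hop view, so within $k-2$ rounds the algorithm cannot tell the copy $T_i$ inside $G$ from the stand-alone instance $T$ and acts on it identically. If it outputs fewer than $\Delta$ vertices in some copy, then, exactly as in \Cref{thm:kroundlower}, replacing that copy by an indistinguishable variant leaves one of its vertices undominated, contradicting correctness. Otherwise it outputs at least $g\Delta$ vertices in total, whereas $\{r_1,\dots,r_g\}$ already $k$-hop dominates $G$ — the pendant leaves (the cycle vertices) sit at distance $\le k-1$ from their $r_i$ — so the optimum is at most $g$ and the approximation ratio is at least $\Delta$.

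The main obstacle is the first modification: one must verify that the argument of \Cref{thm:kroundlower} genuinely compresses into $k-2$ rounds for $k$-hop domination while still forcing a factor of \emph{exactly} $\Delta$, keeping each copy within $\tfrac{\Delta^{k+1}-1}{\Delta-1}+O(k)$ vertices, and still exposing a leaf far from all decision-relevant vertices to serve as the cycle anchor — each a mild perturbation of their construction, but all four must hold at once. A minor additional point is checking that attaching the pendant paths introduces no short cycle and does not perturb the $(k-2)$-hop view of any vertex the lower bound argues about.
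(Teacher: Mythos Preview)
Your construction contains a distance inconsistency that breaks the argument. You attach to each copy $T_i$ a pendant path of $k$ fresh vertices, anchored at a leaf $a_i$ whose distance to every decision-relevant vertex---in particular to $r_i$---exceeds $k-2$, and then assert that $\{r_1,\dots,r_g\}$ $k$-hop dominates the final graph because ``the cycle vertices sit at distance $\le k-1$ from their $r_i$.'' But the far end of the path is at distance $k$ from $a_i$, and $\mathrm{dist}(r_i,a_i)\ge k-1$ by your own insulation requirement, so the cycle vertex is at distance at least $2k-1>k$ from $r_i$ (and at distance at least $2k$ from any other $r_j$); the roots do \emph{not} form a $k$-hop dominating set, and the optimum is not bounded by $g$. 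This is not a repairable slip: insulation of $r_i$'s $(k-2)$-hop view forces $\mathrm{dist}(r_i,a_i)>k-2$, while domination of the path's far end forces $\mathrm{dist}(r_i,a_i)+\ell\le k$ for a path of $\ell$ vertices, so $\ell\le 1$---the pendant-path mechanism cannot do the insulating work you assign it. The ``non-symmetric handle'' you suggest as padding does not help with this and would moreover jeopardise the port-numbering symmetry your orbit argument relies on.

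The paper resolves this tension by dropping the pendant paths entirely: it takes trees of depth exactly $k$ and joins one chosen leaf of each tree directly to the chosen leaves of its cycle-neighbours. That leaf is then at distance $k$ from the root and at least $k-1$ from every child of the root, so within $k-2$ rounds neither the root nor any child can see the altered leaf, yet the root still $k$-hop dominates its whole tree (including the cycle vertex). The per-copy lower bound is also argued differently from what you sketch: rather than a two-graph indistinguishability in the style of \Cref{thm:kroundlower}, the paper uses a single-graph symmetry---the root and all $\Delta$ of its children have isomorphic $(k-2)$-hop views, so a port-numbering algorithm selects all $\Delta+1$ of them or none; if none, then one carefully chosen leaf in each of the $\Delta$ subtrees below the root gives $\Delta$ nodes pairwise at distance at least $2k$, each requiring its own dominator.
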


Our main results are matching bounds on the approximation ratio that can be achieved by constant-round algorithms.
The upper bound is obtained by an algorithm that is similar to that of Amiri and Wiederhake.
We also use their key argument, that the optimum solution induces a Voronoi partition into trees, whose outgoing edges must connect to different cells and hence fully contribute to $f(k)$.
However, new ideas are required to obtain the better approximation ratio.
We exploit that the number of high-degree nodes that can be selected obeys a smaller bound due to the fact that they contribute (directly or indirectly) many edges leaving the Voronoi cell of their dominator.
On the other hand, the number of selected low-degree nodes is limited due to a preference for choosing high-degree dominators.
As the distinction between ``low'' and ``high'' is only needed in the analysis, our algorithm remains agnostic of $f(k)$.
\begin{theorem}\label{thm:3k}
Running \Cref{alg:preprocess} and then \Cref{alg:select} requires $3k$ rounds and messages of size $\lceil\log \Delta\rceil$, and returns an $O(k+f(k)^{k/(k+1)})$-approximate $k$-hop MDS.
\end{theorem}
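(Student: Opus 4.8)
The plan is to prove the three assertions in sequence. The round complexity and message size are read directly off the pseudocode: \Cref{alg:preprocess} and \Cref{alg:select} together use $3k$ communication rounds, and every transmitted message is a single hop counter or degree value bounded by $\Delta$, hence fits into $\lceil\log\Delta\rceil$ bits. It remains to show that the returned set $D$ is a feasible $k$-hop dominating set and that $|D|=O\big((k+f(k)^{k/(k+1)})\,|M^*|\big)$ for an optimum $k$-hop MDS $M^*$.

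Feasibility follows the pattern of \Cref{thm:kround}: during preprocessing every node that is logically pruned in some round $i\le k$ retains a pointer to the neighbour outliving it, and composing these pointers reaches, after at most $k$ steps, a node that survives into \Cref{alg:select} and is selected; the degenerate situation in which a whole component is pruned away is caught by an explicit check, exactly as in the warm-up algorithms. Since $g\ge 4k+3$ is larger than twice $k$, the concatenation of these pointers is an honest shortest path rather than a self-intersecting walk, so each node indeed has a selected node within distance $k$.

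For the approximation ratio, fix $M^*$, write $m=|M^*|$, and assign every node to a closest member of $M^*$ (ties broken by a fixed rule), as in \cite{amiri22distributed}. Because $g\ge 4k+3$ exceeds twice the radius $k$ of a cell, each Voronoi cell induces a tree of radius at most $k$, and whenever a node has a neighbour in a different cell the corresponding edge survives contracting every cell to a point; by the definition of $k$-hop expansion this contracted graph on $m$ nodes has at most $f(k)\,m$ edges, so the number of inter-cell edges is $O(f(k)\,m)$. Now split $D$ at a degree threshold $\tau=\Theta(f(k)^{1/(k+1)})$ into the selected nodes of degree more than $\tau$ (call them $D_H$) and the rest ($D_L$). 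The key estimate for $D_H$ is that a selected node $v\in D_H$ lying in a cell $C$ sits atop a deep pruned subtree of $C$: a branch below $v$ that stays inside $C$ is a finite tree and is eaten within $k$ pruning rounds, so each of the $\Omega(\tau)$ branches keeping $v$'s degree above $\tau$ must contain an edge leaving $C$, and because the preference for high-degree dominators spreads the selected nodes of a single cell apart by more than $k$ (more than the cell radius), at most one of them can be an ancestor of any given leaving edge. Hence these branches charge leaving edges injectively and $|D_H|=O(f(k)\,m/\tau)$. For $D_L$, the same preference guarantees that a low-degree node is selected only when its radius-$k$ ball contains no node of degree above $\tau$; such a ball is a tree of radius $k$ and maximum degree at most $\tau$, hence has $O(\tau^{k})$ nodes, and combined with the $>k$ spacing of selected nodes this yields $O(\tau^{k})$ selected low-degree nodes per cell, plus an $O(k)$ per-cell term absorbing path-like and fully-pruned cells as in \Cref{thm:k-1}. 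Summing gives $|D|=O\big((f(k)/\tau+\tau^{k}+k)\,m\big)$, and substituting $\tau^{k+1}=\Theta(f(k))$ balances the first two terms to $O\big((k+f(k)^{k/(k+1)})\,m\big)$.

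The step I expect to be the main obstacle is making the two halves of the previous paragraph coexist: the selection rule, running in few rounds and agnostic of $f(k)$, must enforce the preference for high-degree dominators strongly enough to confine low-degree selections to genuinely low-degree balls and to spread selected nodes out within each cell (for the injective charging of $D_H$), while still covering every node and creating no nesting of selected high-degree nodes inside a cell. A secondary point requiring care is the bookkeeping for cells with few or no leaving edges, which must be handled by the same tree-versus-non-tree case distinction already used in \Cref{thm:kroundlower} and \Cref{thm:k-1}.
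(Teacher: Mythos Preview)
Your high-level outline (split at a degree threshold $\tau\approx f(k)^{1/(k+1)}$, charge high-degree selections to inter-cell edges, bound low-degree selections by the size of a bounded-degree tree) matches the paper. However, two of the steps you rely on are not true for this algorithm, and the paper's actual arguments are structurally different.

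First, the claim that ``the preference for high-degree dominators spreads the selected nodes of a single cell apart by more than $k$'' is unjustified and in fact false: two nodes $u,w$ in the same cell $T_m$ may have different values $r_u,r_w$ and select distinct maximum-degree nodes $d_u,d_w$ that both lie in $T_m$ and are close (even adjacent). Your $D_H$ charging hinges on this spacing to make the map from selected high-degree nodes to leaving edges injective, and your $D_L$ bound uses it to pass from ``the ball has $O(\tau^k)$ nodes'' to ``$O(\tau^k)$ selected nodes per cell''. Both inferences break without the spacing. The paper instead handles nesting of high-degree selected nodes explicitly: it isolates the subset $D'\subseteq D\setminus D_\Delta$ of high-degree selected nodes with at most one high-degree selected descendant in their cell, charges each $d\in D'$ to $\ge \Delta-1$ edges leaving the cell (losing at most two to the parent and the one descendant), and then shows via a forest-contraction argument that the remaining ``deeply nested'' nodes satisfy $|D\setminus(D_\Delta\cup D')|<|D'|$.

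Second, for $D_L$ you attribute ``no node of degree $>\tau$ in the radius-$k$ ball'' to the \emph{selected} node; what the algorithm guarantees is that the \emph{selector} $s$ has no node of degree $>\tau$ within distance $k-r_s$. The paper exploits exactly this: since $m_s$ is within distance $k-r_s$ of $s$, all ancestors of $s$ in $T_{m_s}$ have degree $\le\tau$, so every selector of a low-degree node lies in the degree-$\le\tau$ component of $m_s$ in $T_{m_s}$, a tree of depth $k$ and size $O(\tau^k)$. This bounds the selectors $S_\Delta$ (and hence $|D_\Delta|=|S_\Delta|$) directly, without any spacing argument. Your sketch would need to be reworked along these lines to go through.
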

Despite the similarities between the algorithms, our changes are necessary to achieve the stronger bound: in his thesis~\cite{wiederhake22pulse}, Wiederhake constructs a graph on which the approximation ratio of the algorithm from~\cite{amiri22distributed} is at least $kf(k)$.

Our matching unconditional lower bound is derived from a graph of large girth $g$ and uniform degree $\Delta$, where $\Delta$ is the largest integer satisfying $\Delta(\Delta-1)^k/2\leq f(k)$.
By ensuring that the graph is also bipartite, we obtain a $\Delta$-coloring of the edges.
Thus, we can assign port numbers by edge color, resulting in identical views of all nodes, regardless of running time.
This forces any port numbering algorithm to select all nodes.

Graphs of uniform degree and large girth are known to exist~\cite{erdoes63abschaetzung}, cf.~\cite{coupette20breezing} for an English version we base our proof on.
However, in addition we need to make sure that the graph actually has a small $k$-hop dominating set.
To this end, we slightly modify Erd\"os' proof of existence.
Instead of inductively increasing the degree of a uniform graph of high girth, we ``plant'' a small $k$-hop dominating set $M$ by starting from a forest of $|M|$ trees of depth $k$ and uniform degree $\Delta$ of inner nodes.
We then follow Erd\"os' approach, but limited to the set of leaves of such trees, inductively increasing their degree to $\Delta$ while maintaining the original trees.
\begin{theorem}\label{thm:lower}
Any algorithm has approximation ratio $\Omega(f(k)^{k/(k+1)})$.
\end{theorem}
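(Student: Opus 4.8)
The plan is to construct, for every sufficiently large value of $f(k)$, a single hard instance on which every deterministic algorithm is forced to select all vertices while an optimal $k$-hop MDS is smaller by a factor $\Omega(f(k)^{k/(k+1)})$. Let $\Delta$ be the largest integer with $\Delta(\Delta-1)^k/2\le f(k)$; assuming $f(k)\ge 1$ (otherwise the bound is trivial) we have $\Delta\ge 2$, and maximality of $\Delta$ gives $\Delta=\Theta(f(k)^{1/(k+1)})$, hence $\Delta^k=\Theta(f(k)^{k/(k+1)})$, with constants depending only on $k$. It therefore suffices to exhibit, for infinitely many $m\in\mathbb{N}$, a connected bipartite $\Delta$-regular graph $G=(V,E)$ of girth at least $4k+3$ with $|V|=\Theta_k(m\Delta^k)$ that admits a $k$-hop dominating set of size $m$: since a correct algorithm must return a feasible --- in particular nonempty --- set, showing that every deterministic port-numbering algorithm returns all of $V$ yields an approximation ratio of at least $|V|/m=\Theta_k(\Delta^k)=\Omega(f(k)^{k/(k+1)})$, and the lift to the \textsc{Local} model with identifiers follows from \cite{goeoes13lower}.

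For the construction I would start from a forest $F$ of $m$ disjoint trees of depth exactly $k$ in which the root and every internal node has degree $\Delta$; the leaves are then exactly the depth-$k$ vertices and have degree $1$, and $|V(F)|=m\bigl(1+\sum_{i=1}^{k}\Delta(\Delta-1)^{i-1}\bigr)=\Theta_k(m\Delta^k)$. Whatever further edges we add, the roots form a $k$-hop dominating set $M$ of size $m$, since every vertex reaches its root along a path of length at most $k$ inside its tree; and we never delete vertices, so $|V|=|V(F)|$ and $\mathrm{OPT}\le m$. To turn $F$ into a $\Delta$-regular graph of girth at least $4k+3$ I would follow Erd\H{o}s' existence argument (\cite{erdoes63abschaetzung}, in the form of \cite{coupette20breezing}), but restricted to the leaf set: add $\Delta-1$ successive matchings among the currently below-degree vertices, at each step choosing a matching that closes no cycle shorter than $4k+3$, respects a fixed $2$-coloring of $V$ (obtained by $2$-coloring each tree and joining leaves only between ``even'' and ``odd'' trees, so every new edge crosses the two color classes), and keeps $G$ connected. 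Finally, any $\Delta$-regular graph of girth at least $4k+3$ has $k$-hop expansion at most $\Delta(\Delta-1)^k/2\le f(k)$: in any admissible contraction, each contracted piece has radius at most $k$ and is therefore a tree (a shortest cycle inside it would have length at most $4k+1$, below the girth), so a piece on $s$ vertices has $(\Delta-2)s+2\le\Delta(\Delta-1)^k$ edges leaving it, whence the contracted graph carries at most $\Delta(\Delta-1)^k/2$ edges per vertex; thus $G$ is a legitimate instance.

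The impossibility step is routine. Since $G$ is bipartite and $\Delta$-regular, K\"onig's edge-coloring theorem yields a proper edge-coloring $c\colon E\to\{1,\dots,\Delta\}$; declaring port number $i$ at a vertex $v$ to name the edge of color $i$ at $v$ makes every edge carry the same port number at both of its endpoints. By induction on the round number, all vertices are in identical states throughout any run of a deterministic port-numbering algorithm, regardless of its running time: initially each vertex only knows that it has degree $\Delta$ with ports $1,\dots,\Delta$; and if the states agree before a round, all vertices send the same port-indexed message vector, and by symmetry of the port assignment each vertex receives, on its port $i$, exactly the $i$-th component of that vector, so all vertices end the round in the same state. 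Hence all vertices produce the same output; as the all-``out'' output is the infeasible empty set, a correct algorithm outputs the all-``in'' set $V$, giving approximation ratio $|V|/\mathrm{OPT}\ge|V|/m=\Omega(f(k)^{k/(k+1)})$.

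I expect the construction step to be the only genuine obstacle: one must simultaneously secure girth at least $4k+3$, $\Delta$-regularity, bipartiteness, connectivity, and intact planted trees while keeping $|V|=\Theta_k(m\Delta^k)$, and this rests on the ``leaves-only'' analogue of Erd\H{o}s' counting estimate --- that at each of the $\Delta-1$ matching steps only a vanishing fraction of eligible leaf pairs would close a short cycle, so that a short-cycle-free near-perfect matching on the remaining below-degree vertices, compatible with the $2$-coloring, can be extracted. The edge-coloring and indistinguishability arguments and the $k$-hop expansion bound are, by comparison, straightforward.
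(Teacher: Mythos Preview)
Your proposal is correct and follows essentially the same approach as the paper: plant a $k$-hop MDS as $m$ depth-$k$ trees of uniform internal degree $\Delta$, complete the leaves to $\Delta$-regularity with girth at least $4k+3$ via an Erd\H{o}s-style edge-addition argument, obtain a proper $\Delta$-edge-coloring from bipartiteness and regularity, and use matching port numbers to force identical views. The one notable difference is how bipartiteness is secured: the paper first runs the Erd\H{o}s construction unconstrained and then passes to the bipartite double cover (which preserves uniform degree, does not decrease girth, and at most doubles the dominating set), whereas you maintain a $2$-coloring throughout by splitting the trees into two groups and only adding leaf edges across groups. Both work, but the double-cover route is cleaner because it avoids carrying the bipartition as an extra constraint through the counting argument; your version requires that the two leaf classes have equal size (so $m$ even) and that the ``few short-cycle-creating pairs'' estimate be done inside the bipartite setting, which is fine but adds bookkeeping. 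Connectivity, which you flag, is handled in either approach by restricting to a connected component if necessary, since all components inherit the relevant properties.
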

We note that an $\Omega(k)$ lower bound follows from the inability to break symmetry on rings;
see~\cite{linial92locality} for the classic $\Omega(\log^* n)$ lower bound for the setting with IDs.

\paragraph*{Further Related Work.}
Above, we exclusively discussed constant-time constant-approximation algorithms, i.e., neither running time nor approximation ratio should depend on the number of nodes $n$ or the maximum degree $\Delta$ of the graph.
Due to the sheer size of the body of work on distributed dominating set approximation, a survey would be required to do it justice.
Accordingly, we confine ourselves to a few points putting our work into context.
\begin{itemize}
  \item An equivalent definition of a $k$-hop dominating set is to ask for a dominating set of the $k$-th power of the input graph. Therefore, up to a possible factor of at most $k$ in running time, considering $k>1$ makes a difference only when message size is bounded. 
  \item Approximating minimum dominating sets better than factor $\ln \Delta$ is NP-hard~\cite{dinur14analytical}. More general ``sparse'' graph classes, e.g., graphs of bounded arboricity $\alpha$, include all graphs of maximum degree $\alpha$. Therefore, one should not expect constant approximations by simple---more precisely, computationally efficient---algorithms. However, this might be different for more restrictive graph classes, and this hardness result leaves a lot of room for sublinear approximations in terms of density parameters.
  \item As little as $O(\log^* n)$ rounds make a substantial difference when unique identifiers are available. For instance, deterministic $k$-hop domination on rings with approximation ratio independent of $k$ requires $\Theta(\log^* n)$ rounds~\cite{cole86deterministic,linial92locality}. Moreover, $\Theta(\log^* n)$ rounds are enough to turn several, if not all, of the above constant-factor approximations into approximation schemes~\cite{amiri19distributed,czygrinow08fast,czygrinow18distributed,czygrinow22distributed}.
  \item In general graphs, there is a constant $c>0$ such that in $r$ rounds, one cannot achieve approximation ratio better than $\Omega(n^{c/r^2}/r)$ or $\Omega(\Delta^{1/(r+1)}/r)$ as function of $n$ and $\Delta$, respectively~\cite{kuhn16local}. These bounds are matched up to small factors by randomized algorithms using large messages~\cite{kuhn16local}.
  \item In graphs of arboricity $\alpha$ within $O(\log \Delta)$ rounds, a deterministic $O(\alpha)$-approximation is feasible~\cite{dory22near}. A reduction from the lower bound in~\cite{kuhn16local} shows that this running time is asymptotically optimal. Note that the class of graphs of bounded arboricity contains all the above graph classes for which constant-time constant-factor approximations are known.
\end{itemize}

\section{Time-optimal Algorithms}

In this section, we characterize the minimum round complexity for achieving a constant-factor approximation, i.e., one that depends on $k$ and $f(k)$ only.
\medskip

\paragraph*{Structural Characterization and Pruning Algorithms.}

We start by specifying a set of nodes that is ``safe'' to select in the sense that it is at most factor $2kf(k)$ larger than a minimum $k$-hop dominating set.
In contrast to later material, the results in this subsection are mostly implicit in~\cite{wiederhake22pulse}.
Wiederhake proved the structural properties shown here to bound the approximation ratio of the algorithm given in~\cite{amiri22distributed}, without taking the step of presenting the resulting simpler and faster algorithms we give here.
Moreover, we provide what we consider a simplified exposition with minor additions serving our needs in later sections.
The easiest way to describe the node set we are interested in is by an extremely simple algorithm: delete nodes of degree $1$ for $k$ iterations and keep what remains, cf.~\Cref{fig:prune}.
Note that this requires only $k-1$ rounds of communication, cf. \Cref{alg:k-1}.
\begin{figure}[t!]
	\centering
	\includegraphics[page=7,width=.6\textwidth]{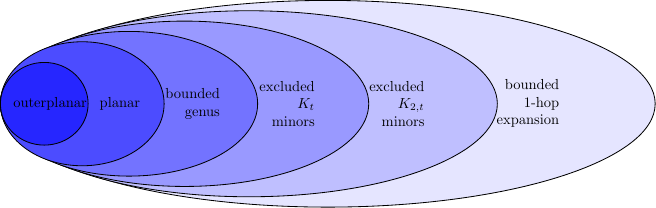}
	\caption{Example for the pruning procedure with k being 2. The deleted nodes are marked in red. Note that the set of deleted nodes induces a forest, i.e., nodes in a cycle are never deleted.}\label{fig:prune}
\end{figure}

\begin{algorithm}
\caption{Pruning algorithm, code at node v.}\label{alg:k-1}
\begin{algorithmic}[1]
\For{$k-1$ rounds}
  \If{exactly one neighbor has not sent ``del'' to $v$ yet}
    \State send ``del'' to this neighbor
  \EndIf
\EndFor
\If{sent ``del'' or exactly one neighbor has not sent ``del'' to $v$}
  \State \Return{false}
\Else
  \State \Return{true}
\EndIf
\end{algorithmic}
\end{algorithm}
Note that a node outputting $true$ deletes himself. For all presented algorithms, the return values denote the output of a node $v \in V$. For convenience, we introduce the following notation.

\begin{definition}[Pruned Graph]
Throughout this work, we denote by $G'=(V',E')$ the input graph and by $G=(V,E)$ the \emph{pruned} (sub)graph induced by the nodes that output true when running \Cref{alg:k-1}.
\end{definition}
It is possible that $G$ is empty, but this happens only if $G'$ is a tree of depth at most $k$.
\begin{observation}\label{obs:nontree}
If $G'$ is not a tree, then $V\neq \emptyset$.
\end{observation}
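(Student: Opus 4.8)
The plan is to show the contrapositive: if running \Cref{alg:k-1} deletes every node (i.e., $V = \emptyset$), then $G'$ must be a tree. The key invariant driving the argument is the one already highlighted in the caption of \Cref{fig:prune}: the pruning procedure only ever deletes nodes that, together with the edges along which ``del'' is sent, form a forest; in particular no node lying on a cycle is ever deleted. I would make this precise by tracking, for each node $v$ that sends ``del'' to a neighbor $u$, the rooted subtree $T_v$ of nodes that have (transitively) sent ``del'' toward $v$ before $v$ itself became a degree-$1$ leaf of the not-yet-deleted subgraph.

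First I would argue, by induction on the round number $i$, that when a node $v$ sends ``del'' in round $i$, all of $v$'s neighbors except the unique recipient have already been deleted, and the union of their associated subtrees together with $v$ is itself a tree rooted at $v$ (it is connected, and acyclic because each deleted node has exactly one ``parent'' — the neighbor it sent ``del'' to — and deleted nodes form a strict hierarchy by deletion time). Consequently, at any point in the execution, the set of nodes deleted so far induces a forest in $G'$, and moreover each such node has degree $1$ in the subgraph induced by itself together with the not-yet-deleted nodes.

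Next, suppose for contradiction that $V = \emptyset$ but $G'$ is not a tree. Since $G'$ is connected (it is the network), not being a tree means it contains a cycle $C$. Take any node $v$ on $C$. For $v$ to be deleted, at the round it sends ``del'' it must have had at most one non-deleted neighbor; but the two neighbors of $v$ along $C$ can never both be deleted before $v$, because — by the forest structure just established — the first node of $C$ to be deleted has both of its $C$-neighbors still present at that moment, so it has degree at least $2$ in the induced subgraph of surviving nodes and hence does not send ``del'' and is not deleted. This contradicts $V = \emptyset$, so $G'$ must be a tree.

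The main obstacle I anticipate is making the ``forest of deleted nodes'' invariant airtight across rounds, in particular handling the asynchrony-free but still subtle case where several neighbors of a node send ``del'' to it in the \emph{same} round, or where a node's last non-deleted neighbor itself sends ``del'' to it simultaneously (both become ``leaves'' at once). I would resolve this by noting that \Cref{alg:k-1} sends ``del'' only when there is \emph{exactly one} surviving neighbor, so in any fixed round each surviving node sends ``del'' along at most one edge; orienting every ``del''-edge from sender to recipient then yields a forest because following these arcs strictly decreases the round of deletion, ruling out cycles. Everything else is routine once this structural claim is in place; note we do not even need the girth or $k$-hop expansion hypotheses here — only that $G'$ is connected and that $k \ge 1$.
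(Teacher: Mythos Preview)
Your proposal is correct and, at its core, uses exactly the same argument as the paper: a node on a cycle always has both of its cycle-neighbors surviving until some cycle node is deleted, so no cycle node can be the first one deleted. The paper states this in three sentences without the contrapositive framing or the forest-of-deleted-nodes invariant; that invariant is true and is used later (e.g., in \Cref{lem:dom}), but for this observation it is unnecessary overhead, and your anticipated obstacle about simultaneous deletions is a non-issue for the cycle argument since one simply considers the earliest \emph{round} in which any cycle node would be deleted.
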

\begin{proof}
As $G'$ is connected and not a tree, it contains a cycle.
So long as no node from the cycle is deleted, all nodes in the cycle maintain degree at least $2$.
Hence, no node in the cycle can get deleted (first), implying that $V\neq \emptyset$.
\end{proof}
We address the special case $V=\emptyset$ later, in \Cref{alg:kround}.
Otherwise, the pruning algorithm results in a subgraph containing a $k$-hop MDS of the input graph $G'$.
\begin{lemma}\label{lem:dom}
If $V\neq \emptyset$, there is a $k$-hop MDS $M\subseteq V$ of $G'$ with the following property.
For $v\in V$, let
\begin{equation*}
r_v := \begin{cases}
  0 & \mbox{if $v$ has no neighbor in $V'\setminus V$}\\
  r & \mbox{if $r$ is the latest round in which a neighbor of $v$ sent ``del''}\\
  k & \mbox{if no neighbor of $v$ in $V'\setminus V\neq \emptyset$ sent a ``del'' message,}
\end{cases}
\end{equation*}
i.e., $r_v$ is the latest iteration of \Cref{alg:k-1} when a neigbhor of $v$ is deleted. Then there is $m\in M$ within distance $k-r_v$ of $v$; note that this distance as well as the shortest path depends on G.
Moreover, any set with these properties is a $k$-hop dominating set of $G'$.
\end{lemma}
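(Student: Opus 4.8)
The plan is to establish the lemma in two directions, corresponding to the two halves of its statement: first that the claimed $M$ exists inside $V$, and second that any set satisfying the stated distance property is a $k$-hop dominating set of $G'$.

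For the first direction, I would start from an arbitrary $k$-hop MDS $M'$ of $G'$ and transform it into a set $M \subseteq V$ of the same size without increasing it. The key observation is that the deleted nodes $V' \setminus V$ form a forest (as noted in the caption of \Cref{fig:prune}), and more precisely each deleted node $u$ lies in a tree hanging off $V$ via a unique ``attachment path.'' If $u \in M'$ was deleted in round $r$, then the subtree of deleted nodes it can reach without passing through $V$ has depth bounded in terms of $r$, so $u$ can only be responsible for dominating nodes close to it; I would reroute this domination by pushing $u$ toward the pruned graph, replacing it with the node of $V$ on its attachment path (or an appropriate node within distance $k$). A careful bookkeeping argument — tracking, for each $v \in V$, how deep into the deleted forest $v$'s dominator could have been and hence how far the replacement dominator in $V$ can be — should yield exactly the bound ``within distance $k - r_v$ of $v$ in $G$.'' The subtlety that distances must be measured in $G$ rather than $G'$ is handled by noting that shortest paths between nodes of $V$ that leave $V$ would have to traverse a deleted tree and come back, contradicting the tree structure (or the girth bound); so $G$-distances among surviving nodes equal $G'$-distances.

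For the converse direction, suppose $M \subseteq V$ satisfies: for every $v \in V$ there is $m \in M$ with $\mathrm{dist}_G(v,m) \le k - r_v$. I must show every node of $G'$ — including deleted ones — is $k$-hop dominated in $G'$. A deleted node $w \in V' \setminus V$ sits at some depth $d$ inside a deleted tree attached to a node $v \in V$, where the attachment happens precisely because $v$ had a neighbor deleted in round $r_v \ge k - (\text{something about } d)$; the pruning dynamics force $d \le r_v$ essentially by induction on the rounds (a node deleted in round $r$ is the root of a deleted subtree of depth $\le r-1$, and it is adjacent to its parent which survives or is deleted later). Then $\mathrm{dist}_{G'}(w, m) \le d + \mathrm{dist}_{G'}(v,m) \le r_v + (k - r_v) = k$, using that the $G$-path from $v$ to $m$ is also a $G'$-path. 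Nodes in $V$ are dominated directly since $k - r_v \le k$.

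The main obstacle I expect is pinning down the precise relationship between the round $r$ in which a node is deleted and the depth/structure of the deleted subtree it anchors, and then threading this through the rerouting argument so that the resulting distances come out as exactly $k - r_v$ rather than something weaker. In particular, the definition of $r_v$ has three cases — no deleted neighbor ($r_v = 0$), a deleted neighbor in round $r$ ($r_v = r$), and a deleted neighbor that never sent ``del'' ($r_v = k$) — and I would need to check that the induction on the pruning process handles the third case correctly, since there a neighbor in $V' \setminus V$ exists but its deletion is not ``witnessed'' by a ``del'' message to $v$; this corresponds to $v$ itself being the surviving endpoint of an attachment path, and the bound $k - r_v = 0$ should then just say $v \in M$ is forced, or that the relevant dominator coincides with $v$. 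Verifying that the simple local rule of \Cref{alg:k-1} really produces a forest with these depth guarantees, and that no short cycle in $G'$ interferes, is where the girth hypothesis $g \ge 4k+3$ quietly does its work, and I would make that dependence explicit.
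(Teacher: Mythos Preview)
Your plan is essentially the paper's proof: start from an arbitrary $k$-hop MDS $M'$, replace each deleted dominator by the unique node of $V$ to which its deleted component attaches, and for the converse use that a deleted node at depth $d$ in a component attached to $v$ satisfies $d\le r_v$. The replacement and the depth--round correspondence you describe are exactly what the paper does.

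One misconception to flag: you repeatedly anticipate that the girth bound $g\ge 4k+3$ is what guarantees the forest structure of the deleted nodes and the equality of $G$- and $G'$-distances inside $V$. It is not. The deleted subgraph is acyclic for the trivial reason that only degree-$1$ nodes are ever removed; one cannot peel a cycle by degree-$1$ deletions, in any graph. Likewise, each deleted component attaches to $V$ by a single edge because the last-deleted node in the component had exactly one surviving neighbor at its moment of deletion; this already forces any $G'$-path between two nodes of $V$ that enters a deleted component to leave it through the same edge, so it cannot be a shortest path. None of this uses girth. The girth hypothesis enters only later, in \Cref{lem:voronoi}, to control the Voronoi cells of $M$. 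If you ``make the dependence explicit'' here you will be inventing a dependence that does not exist, and you may obscure the fact that this lemma holds for arbitrary connected $G'$.

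Your worry about the third case $r_v=k$ is legitimate but resolves cleanly: such a $v$ has a deleted neighbor removed in the (silent) $k$-th iteration, so the attached tree has depth exactly $k$ from $v$, and its deepest leaf can only be $k$-hop dominated by $v$ itself; hence any $k$-hop dominating set contained in $V$ must include $v$, which is precisely the bound $k-r_v=0$.
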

\begin{proof}
Observe that the connected components of the subgraph induced by deleted nodes are trees.
As the graph is connected and $V\neq \emptyset$, each such tree is connected to a unique $v\in V$ by a single edge.
If for such a tree the last deleted node is deleted in round $r$, induction shows that the most distant leaf of the tree from $v$ is precisely in distance $r$ from $v$.
Thus, any $k$-hop MDS $M\subseteq V$ must contain a node within distance $k-r_v$ of $v$, and this is sufficient to cover all nodes in the deleted tree component.
The first statement of the lemma now follows from taking any $k$-hop MDS $M'$ of $G'$ and replacing any deleted dominator $m'\in M'$ by the closest $v\in V$, i.e., the unique node to which the deleted tree component of $m'$ is attached to.
As the result is a $k$-hop MDS $M\subseteq V$ of $G'$ of size equal to $M'$, it meets all requirements.

To show the second claim of the lemma, note that $k-r_v\leq k$ for all possible values of $r_v$, implying that any set with the required properties covers $V$. 
Recalling that having a dominator within distance $k-r_v$ of any $v\in V$ with $r_v>0$ is sufficient to cover all nodes in $V'\setminus V$, the second claim follows.
\end{proof}
We now show that, due to the high girth and small $k$-hop expansion of $G'$ and thus also $G$, $|V|$ is not much larger than $|M|$.
To this end, we partition $G$ with respect to $M$.
\begin{definition}[MDS Voronoi decomposition]
Let $\{T_m\}_{m\in M}$ be a Voronoi partition of $G$ with respect to the $k$-hop MDS $M$ given by \Cref{lem:dom}, i.e., $T_m$ contains all $v\in V$ for which $m$ is the closest dominating node (ties are broken arbitrarily).
For $v\in V$, denote by $m_v\in M$ the node such that $v\in T_m$.
\end{definition}
Since the girth is at least $4k+3$, for any two distinct $m,m'\in M$, there can be at most one edge between the trees the partitions $T_m$ and $T_{m'}$ induce.
On the other hand, the pruning ensures that each leaf in such a tree has a neighbor in a different tree.
\begin{lemma}\label{lem:voronoi}
For each $m\in M$, $T_m$ induces a tree of depth at most $k$ (when rooted at $m$).
Unless $T_m=\{m\}$ is a singleton, each leaf of $T_m$ has at least one edge to a node outside of $T_m$.
No two edges leaving $T_m$ connect to the same $T_{m'}$ for some $m'\in M\setminus \{m\}$.
\end{lemma}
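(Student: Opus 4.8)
The plan is to prove the three assertions in turn, using the girth bound $g\ge 4k+3$ to forbid short cycles and \Cref{lem:dom} to control $G$-distances to dominators. For the first assertion: by \Cref{lem:dom} every $v\in V$ has a dominator within $G$-distance $k-r_v\le k$, so $d_G(v,m_v)\le k$ and $G[T_m]$ lies in the $G$-ball of radius $k$ about $m$. For connectedness and the depth bound I would take the Voronoi partition to be the one induced by a multi-source BFS from $M$: each vertex inherits the dominator of one of its BFS-predecessors, which is a legal choice (a prefix of a shortest $v$--$m_v$ path is again shortest for the intermediate vertex) and makes $G[T_m]$ connected with every $v$ at $G[T_m]$-distance $d_G(v,m)\le k$ from $m$. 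For acyclicity, suppose $G[T_m]$ contained a cycle $C$; then $C\subseteq B_k(m)$ and $|C|\ge g\ge 4k+3$. Pick $u,w\in C$ at cyclic distance $2k+1$ and let $A$ be the corresponding arc, which is the shorter one since the other has length $\ge 2k+2$. As $d_G(u,w)\le d_G(u,m)+d_G(m,w)\le 2k<2k+1=|A|$, a shortest $u$--$w$ path $P$ of $G$ differs from $A$, so $E(P)\triangle E(A)$ is a non-empty even subgraph of $G$ and contains a cycle of length at most $|P|+|A|\le 4k+1<g$, a contradiction. (Using the fixed-length arc $2k+1$ rather than half of $C$ is what keeps this cycle short even when $C$ is long.)

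\emph{A leaf $\ell$ of a non-singleton $T_m$ has an edge leaving $T_m$.} Here $\ell\neq m$, and since $G[T_m]$ is a tree, $\ell$ has exactly one $G$-neighbour in $T_m$, its parent $p$; so it suffices to show $\deg_G(\ell)\ge 2$. Suppose $\deg_G(\ell)=1$. Then all other $G'$-neighbours of $\ell$ lie in deleted pendant trees attached to $\ell$, and, since $p$ is never deleted, $\ell$ survives all $k$ pruning rounds only if one such pendant tree has depth exactly $k$. Its deepest leaf $z$ is at distance $k$ from $\ell$, and every path from $z$ into $V$ passes through $\ell$, so $d_G(z,m')>k$ for every $m'\in V\setminus\{\ell\}$; as $M\subseteq V$ is a $k$-hop dominating set, $z$ forces $\ell\in M$ and hence $m_\ell=\ell\neq m$, contradicting $\ell\in T_m$. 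Thus $\deg_G(\ell)\ge 2$, and every $G$-neighbour of $\ell$ other than $p$ lies in some $T_{m'}$ with $m'\neq m$, i.e.\ outside $T_m$.

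\emph{At most one edge joins $T_m$ and $T_{m'}$ for $m'\neq m$.} Suppose $e_1=\{a,a'\}$ and $e_2=\{b,b'\}$ are distinct edges with $a,b\in T_m$, $a',b'\in T_{m'}$. Let $P$ be the path from $a$ to $b$ in the tree $G[T_m]$ and $Q$ the path from $a'$ to $b'$ in $G[T_{m'}]$; both trees have depth at most $k$, so $|P|,|Q|\le 2k$. Then $P,\,e_2,\,\overline{Q},\,e_1$ is a closed walk whose edges are pairwise distinct ($E(P)$ sits inside $T_m$, $E(Q)$ inside the disjoint cell $T_{m'}$, the cross-edges $e_1,e_2$ lie in neither, and $e_1\neq e_2$). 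Hence it is a non-empty even subgraph of $G$ and contains a cycle of length at most $|P|+|Q|+2\le 4k+2<4k+3\le g$, a contradiction. (When $a=b$ or $a'=b'$ the walk degenerates to one cross-edge, one tree-path and the other cross-edge, and the same argument applies.)

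The first and third assertions are routine once the ``symmetric difference of two detours'' trick is set up with the right arc length. I expect the second assertion to be the crux: one must read off from the dynamics of \Cref{alg:k-1} that a vertex of $G$ of $G$-degree one can survive $k$ pruning rounds only by carrying a depth-$k$ pendant tree of deleted vertices, and then translate the resulting unreachability of that tree's deepest leaf into forced membership of $\ell$ in every $k$-hop dominating set contained in $V$.
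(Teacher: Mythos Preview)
Your proposal is correct and follows essentially the same three-step structure as the paper's proof, with the same use of the girth bound for parts one and three and the same ``$r_v=k$ forces $\ell\in M$'' idea for part two. A few minor remarks on presentation:

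For acyclicity you work harder than necessary. The paper simply observes that any edge of $G[T_m]$ not in a BFS tree from $m$ closes a cycle of length at most $2k+1$ (both endpoints lie within distance $k$ of $m$), which already contradicts girth $\ge 4k+3$. Your symmetric-difference argument with an arc of length $2k+1$ is correct but roundabout; it would be needed only if you did not know a common nearby vertex like $m$.

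Conversely, you are more careful than the paper about connectedness of the cells: you explicitly pick the Voronoi partition via multi-source BFS so that prefixes of chosen shortest paths stay in the same cell. The paper's ``ties are broken arbitrarily'' is slightly loose on this point, so your version is cleaner.

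For the second assertion your argument is the same as the paper's, just unpacked: where the paper writes ``$r_\ell=k$, hence by \Cref{lem:dom} a dominator lies at distance $0$'', you spell out that a depth-$k$ pendant tree is needed to keep $\ell$ alive through round $k$ and that its deepest leaf is dominated from $V$ only by $\ell$. One small slip: you assert ``here $\ell\ne m$'' from non-singletonness alone, but $m$ can be a degree-one vertex of the tree $G[T_m]$ (a root with a single child). The paper makes the same implicit assumption; it is harmless for the downstream uses of the lemma, but strictly speaking the case $\ell=m$ should be noted separately (and is trivially fine, since $m\in M$ already).
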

\begin{proof}
Because for each $v\in T_m$, $m$ is the closest dominator, the subgraph induced by $T_m$ is connected.
Moreover, $m$ is within distance $k$ of all nodes in $T_m$, and any additional edge within $T_m$ beyond those of a breadth-first search tree would result in a cycle of length at most $2k+1$.
As the girth is larger than $2k+1$, $T_m$ induces a connected acyclic subgraph, i.e, a tree.

Concerning the second statement, observe that $m$ is a leaf if and only if $T_m$ is a singleton.
Hence, assume for contradiction that there is a leaf $m\neq v\in T_m$ without a neighbor in $V$.
As $v\in T_m\subseteq V$, it has not been deleted and must hence have a neighbor that was deleted in the final iteration of \Cref{alg:k-1}, i.e., $r_v=k$.
By \Cref{lem:dom}, $v$ then has a dominator in distance $k-r_v=0$, i.e., $v\in M$ and hence $v=m$, a contradiction.

The third statement again follows from the girth constraint.
Assuming for contradiction that there are $m,m'\in M$, $m\neq m'$, such that there are two edges between $T_m$ and $T_{m'}$, we obtain a cycle of length at most $4k+2$ by joining the paths of length at most $k$ from their endpoints to $m$ or $m'$, respectively, with the edges themselves.
As the girth is at least $4k+3$, such a cycle cannot exist.
\end{proof}
For ease of notation, we will use $T_m$ to refer to both the node set and its induced tree.

Together, the above properties imply that there are at most $2f(k)|M|$ leaves in the trees $T_m$, $m\in M$, in total.
Using that the depth of each $T_m$ is at most $k$, this bounds $|V|$.
\begin{lemma}\label{lem:usefk}
$|V|\le (2kf(k)+1)|M|$.
\end{lemma}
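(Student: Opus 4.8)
The plan is to contract the Voronoi cells, bound the number of edges of the resulting minor via the $k$-hop expansion, and then charge $|V|$ against those edges. Throughout I assume $V\neq\emptyset$ (otherwise the decomposition from \Cref{lem:dom} is not defined and there is nothing to prove).

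First I would form the graph $H$ obtained from $G$ by contracting each $T_m$, $m\in M$, to a single node $v_m$. By \Cref{lem:voronoi} each $T_m$ induces a tree of depth, hence radius, at most $k$, and the $T_m$ partition $V$; thus $H$ is obtained from $G'$ by deleting the pruned nodes $V'\setminus V$ and then contracting disjoint subgraphs of radius at most $k$, both of which are admissible operations for the $k$-hop expansion. Hence $|E(H)|\le f(k)\,|V(H)| = f(k)\,|M|$.

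Next I would relate $|E(H)|$ to the edges leaving the cells. An edge of $H$ between $v_m$ and $v_{m'}$ stems from an edge of $G$ between $T_m$ and $T_{m'}$; by the third claim of \Cref{lem:voronoi} there is at most one such edge, so this yields a bijection between $E(H)$ and the set of inter-cell edges of $G$, and moreover $\deg_H(v_m)$ equals \emph{exactly} the number of edges of $G$ leaving $T_m$. Summing over $m\in M$ and using handshaking, the total number of edges leaving cells (each counted for both of its endpoint cells) equals $2|E(H)|\le 2f(k)|M|$. By the second claim of \Cref{lem:voronoi}, every leaf of a non-singleton $T_m$ is incident to at least one edge leaving $T_m$, and distinct leaves yield distinct such edges, so the number $\ell_m$ of leaves of $T_m$ is at most the number of edges leaving $T_m$; consequently the total number of leaves over all non-singleton cells is at most $2f(k)|M|$.

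Finally I would bound $|T_m|$ by $\ell_m$: since $T_m$ has depth at most $k$, every node of $T_m$ lies on some root-to-leaf path, which contains at most $k$ non-root nodes, so $|T_m|\le k\ell_m+1$. Splitting $M$ into the set $S$ of $m$ with $T_m=\{m\}$ and its complement gives
\[
|V|=\sum_{m\in M}|T_m| = |S| + \sum_{m\in M\setminus S}|T_m| \le |S| + \sum_{m\in M\setminus S}(k\ell_m+1) \le |S| + |M\setminus S| + k\cdot 2f(k)|M| = (2kf(k)+1)|M|.
\]
I expect the only real subtlety to be the contraction step: one must check that the $T_m$ form a legitimate witness for the $k$-hop expansion bound, and, crucially, that it is precisely the girth-derived uniqueness of inter-cell edges (the third claim of \Cref{lem:voronoi}) that keeps the count of leaving edges from exceeding $\deg_H(v_m)$; without it the charging collapses. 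The tree-size estimate $|T_m|\le k\ell_m+1$ and the final summation are routine.
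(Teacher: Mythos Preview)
Your proof is correct and follows essentially the same route as the paper's: contract the Voronoi cells $T_m$, invoke the $k$-hop expansion to bound $|E(H)|\le f(k)|M|$, use \Cref{lem:voronoi} to charge leaves against inter-cell edges, and finish with a depth-$k$ tree-size estimate. The only cosmetic differences are that the paper isolates the degenerate case $V=\{m\}$ explicitly rather than splitting into singleton and non-singleton cells, and it phrases the tree bound as ``each leaf has at most $k-1$ non-root ancestors'' rather than your equivalent ``each root-to-leaf path has at most $k$ non-root nodes''; both yield $|V|\le k|L|+|M|$.
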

\begin{proof}
Starting from $G'$, we delete $V'\setminus V$ and contract $T_m$ for each $m\in M$.
By \Cref{lem:voronoi}, the subgraph induced by $T_m$ has radius at most $k$, i.e., the by the definition of $k$-hop expansion, the resulting minor $H$ of $G'$ has edge density at most $f(k)$.
Observe that $H$ has $|M|$ nodes.
Moreover, for each $m\in M$, unless $T_m$ is a singleton without a neighbor in $V$, by \Cref{lem:voronoi} the degree of the node resulting from contracting $T_m$ is at least the number of leaves of $T_m$.
If $T_m=\{m\}$ without a neighbor in $V$, it follows that $V=\{m\}$, since the connectivity of $G'$ implies that also $G$ is connected, too.
As in this case $|V|=|M|=1$, proceed under the assumption that this special case does not apply.
Thus, denoting by $L$ the total number of leaves in the trees $T_m$, $m\in M$, we have that the minor $H$ has at least $|L|/2$ edges.
It follows that $|L|\le 2f(k)|M|$.

To complete the proof, root for each $m\in M$ the tree induced by $T_m$ at $m$.
As by \Cref{lem:voronoi} the depth of such a tree is at most $k$, each leaf $\ell \in L$ has at most $k-1$ ancestors different from $m$.
Hence $|V|\le |L| + (k-1)|L| + |M| \le (2kf(k)+1)|M|$.
\end{proof}
\Cref{thm:k-1} readily follows from the above results.
\begin{proof}[of \Cref{thm:k-1}]
By \Cref{obs:nontree}, $V\neq \emptyset$.
By \Cref{lem:dom}, $V$ thus is a $k$-hop dominating set of $G'$.
By \Cref{lem:usefk}, $|V|\le (2kf(k)+1)|M|$.
\end{proof}
As we will discuss shortly, the exclusion of trees is indeed necessary for $k-1$ rounds to be sufficient to obtain a good approximation ratio.
However, unsurprisingly this special case can be addressed with little overhead.
Adding one more round of communication, it can be checked whether $V=\emptyset$.
If this is the case, this can be locally fixed by re-adding the (at most two) nodes that have been deleted last.
\begin{algorithm}
\begin{algorithmic}[1]
\State Execute \Cref{alg:k-1}.
\If{$v$ returned false and did not send ``del''}
  \State send ``del'' to the neigbhor from which no ``del'' was received
\EndIf
\If{$v$ returned true or both sent and received a ``del'' message in the same round}
  \State \Return{true}
\Else
  \State \Return{false}
\EndIf
\end{algorithmic}
\caption{Pruning algorithm with fallback for trees, code at v.}\label{alg:kround}
\end{algorithm}
\begin{lemma}\label{lem:del}
If $V\neq \emptyset$, \Cref{alg:kround} returns $V$.
If $V=\emptyset$, \Cref{alg:kround} returns a $k$-hop MDS of $G'$ of size at most $2$.
\end{lemma}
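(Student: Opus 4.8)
The plan is to reduce both claims to a clean combinatorial description of \Cref{alg:k-1}. I read it with the natural ``logical deletion'' semantics: a node sends ``del'' only in the first round in which it has exactly one neighbor that has not yet sent ``del'' (equivalently, upon first becoming a leaf of the subgraph spanned by the not-yet-deleted nodes), and is inert afterwards. Writing $D_j$ for the set of nodes that have sent ``del'' by the end of round $j$, and $G'-D_j$ for the subgraph of $G'$ induced on the complement of $D_j$, we then have $G'-D_0=G'$ and $G'-D_j$ is obtained from $G'-D_{j-1}$ by removing its leaves. I would first record two elementary facts: (i) removing leaves preserves connectivity, so $G'-D_j$ is connected or empty for every $j$; and (ii) if $T^{(j)}:=G'-D_j$ is nonempty, then every node of $D_j$ is within distance $j$ of $T^{(j)}$ in $G'$, by an easy induction on $j$ (a node deleted in round $i\le j$ is a leaf of $T^{(i-1)}$ whose unique surviving neighbor lies in $T^{(i)}$, unless $|T^{(i-1)}|\le 2$, which forces $T^{(j)}=\emptyset$). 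I would also note that if a node $v$ never sends ``del'', then every deleted neighbor $u$ of $v$ sent ``del'' \emph{to} $v$ — when $u$ became a leaf, the undeleted (hence surviving) node $v$ was its sole remaining neighbor. Hence for such $v$, the neighbors from which $v$ has not received ``del'' are exactly its neighbors in $G'-D_{k-1}$, and therefore $v$ lies in the pruned set $V$ iff $v$ has a number of neighbors in $G'-D_{k-1}$ different from $1$; in particular $V\subseteq G'-D_{k-1}$.

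For the first statement the crux is the claim that \textbf{if $V\neq\emptyset$, no node both sends and receives a ``del'' message in the same round}, counting the $k-1$ rounds of \Cref{alg:k-1} and the one extra round of \Cref{alg:kround}. I would prove it by contradiction. In a round $r\le k-1$, a node $v$ that sends ``del'' sends it to its unique surviving neighbor $w$ and has already received ``del'' from all its other neighbors; so a simultaneous incoming ``del'' in round $r$ can only come from $w$, making $\{v,w\}$ a connected component of $G'-D_{r-1}$ — by (i) all of it — so $D_r=V'$ and $V=\emptyset$. In the extra round, only ``survivor'' nodes act (those that returned false and never sent ``del''); by the observation above each sends ``del'' to its unique surviving neighbor, so a mutual ``del'' between survivors $u,v$ again makes $\{u,v\}$ an entire connected component of $G'-D_{k-1}$, whence $G'-D_{k-1}=\{u,v\}$ and, using $V\subseteq G'-D_{k-1}$ and the characterization of $V$ above, $V=\emptyset$. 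As ``sent del during \Cref{alg:k-1}'' and ``acts in the extra round'' are mutually exclusive, these two cases exhaust everything. The claim immediately yields that a node returns true from \Cref{alg:kround} iff it returned true from \Cref{alg:k-1}, i.e., iff it is in $V$, so \Cref{alg:kround} returns $V$.

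For the second statement I would assume $V=\emptyset$; by \Cref{obs:nontree}, $G'$ is then a tree, so iterated leaf removal eventually leaves a single vertex or a single edge. Two sub-cases arise. If some round $r\le k-1$ collapses $G'-D_{r-1}$ to two adjacent nodes $u,v$, they delete each other in round $r$; as in the analysis above this is the only ``same-round'' event and no one acts in the extra round, so the output is exactly $\{u,v\}$, and (ii) places every node within distance $r-1\le k$ of $\{u,v\}$. Otherwise no loop round produces a mutual deletion, whence $G'-D_{k-1}$ is nonempty, and since $V=\emptyset$ forces every vertex of this connected graph to have exactly one neighbor in it (degree $0$ would put it in $V$), it must be a single edge $\{u,v\}$; then in the extra round each of $u,v$ sends ``del'' to the other and no one else acts, so the output is exactly $\{u,v\}$, and (ii) again dominates all of $G'$ within distance $k-1$. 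Either way the output has size $2$ and is a $k$-hop dominating set of $G'$, as required.

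The step I expect to be most delicate is the ``same-round'' claim, and in particular committing to the ``send del at most once'' reading of \Cref{alg:k-1}: without it, a node deleted early can re-announce ``del'' to a still-alive neighbor in the very round that neighbor deletes itself, spuriously flagging the latter and breaking the equality with $V$. Once that semantic choice is fixed, the remaining ingredients — connectivity under leaf removal and the distance bound (ii) — are short inductions, and the recurring observation that ``a same-round mutual `del' means the surviving subgraph has already collapsed to two vertices'' does essentially all the work.
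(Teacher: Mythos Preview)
Your proof is correct and, for the first claim and the size-$2$ claim, follows the same line as the paper: both identify that a simultaneous send/receive of ``del'' can only occur when the surviving subgraph has collapsed to a single edge, and both deduce that \Cref{alg:kround} differs from \Cref{alg:k-1} exactly when $V=\emptyset$. You are considerably more explicit about the case analysis (rounds $\le k-1$ versus the extra round) and about why the collapse to two nodes forces $V=\emptyset$, whereas the paper asserts this in a sentence.

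For the domination claim when $V=\emptyset$, you take a genuinely different route. The paper fixes an arbitrary $v'\in V'$, takes a shortest path to the nearer of the two surviving nodes, and shows by induction along the path that deletion times strictly decrease toward $v'$, forcing the path length below $k$. You instead prove once and for all the global distance bound (your fact~(ii)): if $T^{(j)}\neq\emptyset$ then every node is within distance $j$ of $T^{(j)}$, and then simply instantiate it with $j=r-1$ or $j=k-1$. Your argument is cleaner and reusable (indeed it is morally the same induction that underlies \Cref{lem:dom}), while the paper's path argument is more self-contained but specific to this lemma. Your remark about the ``send once'' reading of \Cref{alg:k-1} is well taken; it is the intended semantics, and as you note the output is unaffected either way.
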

\begin{proof}
\Cref{alg:kround} checks whether both a node and its remaining neighbor have been deleted in the same round.
This is only possible if the subgraph induced by nodes that did not sent ``del'' messages yet consists of two nodes joined by an edge.
Thus, \Cref{alg:kround} adds nodes compared to \Cref{alg:k-1} if and only if $V=\emptyset$.
If it does add nodes, it adds exactly the two nodes that were deleted last.

It remains to prove that in this case these two nodes $m$ and $m'$ constitute a $k$-hop dominating set.
To see this, consider an arbitrary $v'\in V'$.
Since $G'$ is connected, there is a path from $v'$ to both $m$ and $m'$.
Consider the shortest path from $v'$ to one of these nodes; w.l.o.g., say this is $m'$.

Suppose $(v'=p_0,p_1,\ldots,p_{\ell-1},p_{\ell}=m')$, i.e., the path has length $\ell$.
We claim that for $i\in \{1,\ldots,\ell\}$, $p_{i-1}$ must be deleted before $p_i$.
We show this by induction on decreasing $i$.
For the base case of $i=\ell$, recall that $m'=p_{\ell}$ is deleted in the last round, and the only other node deleted in the last round is $m$.
As $m$ is not closer to $v'=p_0$ than $m'$, but $p_{\ell-1}$ is, $p_{\ell-1}$ must have been deleted before $m'=p_{\ell}$.

For the step from $i+1\in \{2,\ldots,\ell\}$ to $i$, note that by the induction hypothesis, $p_{i+1}$ is deleted after $p_i$.
Hence, $p_i$ must have received a ``del'' message from $p_{i-1}$ prior to sending one itself.
Thus, $p_{i-1}$ was deleted prior to $p_i$, completing the induction step.

Overall, it follows that $v'=p_0$ must have been deleted at least $\ell$ iterations earlier than $m'$.
However, there are only $k$ iterations, implying that $\ell<k$.
\end{proof}
Applying this lemma, we arrive at \Cref{thm:kround}.
\begin{proof}[of \Cref{thm:kround}]
If $V\neq \emptyset$, by \Cref{lem:del} the returned set is $V$.
By \Cref{lem:dom}, $V$ then is a $k$-hop dominating set of $G'$ and by \Cref{lem:usefk} the approximation guarantee is met.
If $V=\emptyset$, by \Cref{lem:del} the returned set is a $k$-hop dominating set of size $2$.
As the graph contains at least $2$ nodes and is connected, $|E'|/|V'|\ge (|V'|-1)/|V'|\ge 1/2$.
We conclude that $(2kf(k)+1)|M|\ge (2\cdot 1\cdot 1/2+1)\cdot 1=2$, proving the approximation guarantee.
\end{proof}

\paragraph*{Lower Bounds Showing Strict Time-Optimality.}
To show that the above simple pruning algorithms are indeed time-optimal for achieving a constant approximation ratio, we present matching lower bounds.
First, we show that $k$ rounds are needed to do better than a $\Delta$-approximation, even in trees of maximum degree $\Delta$.
\begin{figure}[t!]
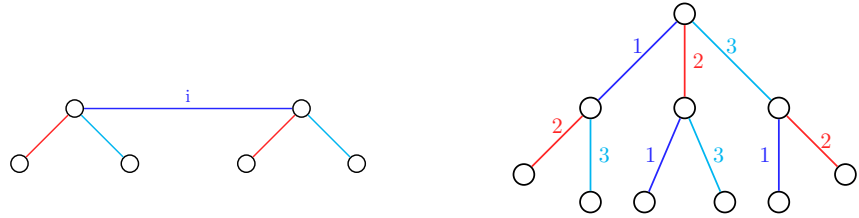

	\centering
	\includegraphics[page=3,width=.45\textwidth]{figs.pdf}\hfill
	\includegraphics[page=4,width=.45\textwidth]{figs.pdf}
	\caption{Graphs used in the proof of \Cref{thm:kroundlower}, for 2-hop domination and maximum degree 3. On the left, the tree consisting of two copies of T joined by an edge of color i is depicted. On the right, for each i a copy of T is attached to a root by an edge of color i. Matching port numbers to colors, nodes in the i-th copy of T cannot distinguish between the two graphs in fewer than k rounds.}\label{fig:lower_k}
\end{figure}
\begin{proof}[of \Cref{thm:kroundlower}]
Fix any port numbering algorithm that outputs a $k$-hop dominating set on input graphs from a family that contains trees of maximum degree $\Delta$.
For $i\in \{1,\ldots,\Delta\}$, denote by $T_i$ the unique properly $\Delta$-edge-colored $(\Delta-1)$-ary tree of depth $k-1$ in which the ``missing'' color among the edges incident to the root is $i$.
Consider the following properly port-numbered trees, cf.~\Cref{fig:lower_k}.
\begin{itemize}
  \item For $i\in \{1,\ldots,\Delta\}$, take two copies of $T_i$, connect their roots with an edge of color $i$, and assign port numbers to match edge colors.
  \item Take one copy of $T_i$ for each $i\in \{1,\ldots,\Delta\}$ and connect it by an edge of color $i$ to a single root node. Again, we assign port numbers to match edge colors.
\end{itemize}
Observe that for each node in each $T_i$, the port-numbered $(k-1)$-hop neighborhood in the first graph for $i$ and the second graph are isomorphic:
all non-leaves have $\Delta$ neighbors with the port numbers of each traversed edge being the same at its endpoints, and the leaves within distance $k-1$ are exactly those of $T_i$.

For each $i\in \{1,\ldots,\Delta\}$, in the first graph the algorithm must select \emph{some} node to ensure $k$-hop domination.
By indistinguishability, for each $i$ the algorithm will select the corresponding node in the copy of $T_i$ in the second graph as well.
Hence, at least $\Delta$ nodes are selected in the second graph, yet the root node $k$-hop dominates the entire graph.
\end{proof}
In other words, even infinite girth and $f(k)=1$ is not good enough for a constant approximation within $k-1$ rounds.
In~\cite{goeoes13lower}, it is shown how to lift this result to the local model.
\begin{corollary}[of Theorem~1.3 in~\cite{goeoes13lower}]\label{cor:kroundlower}
\Cref{thm:kroundlower} applies also when nodes have unique IDs.
\end{corollary}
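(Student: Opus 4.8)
The plan is to obtain \Cref{cor:kroundlower} by invoking Theorem~1.3 of~\cite{goeoes13lower} essentially as a black box. That result shows that for a broad class of locally checkable \emph{covering} problems, deterministic algorithms in the Local Model with unique identifiers are no more powerful than deterministic port-numbering algorithms with respect to the best approximation ratio attainable in a given number of rounds, provided the graph family under consideration satisfies mild closure properties. Consequently, the port-numbering lower bound established in the proof of \Cref{thm:kroundlower} transfers to the setting with identifiers, which is exactly the claim of \Cref{cor:kroundlower}.

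To make this precise, the first step is to check that minimum $k$-hop dominating set falls within the scope of~\cite{goeoes13lower}. For any fixed $k$, a $k$-hop dominating set of $G'$ is precisely a dominating set of the $k$-th power $(G')^k$, so the problem is a covering problem: the feasible solutions are the sets $M\subseteq V'$ such that every node has a member of $M$ within distance at most $k$, and the objective $|M|$ is a sum of node-local indicators. Both the feasibility constraint of a node and its contribution to the objective depend only on its radius-$k$ neighbourhood, which for fixed $k$ is a constant radius; this is exactly the kind of ``simple'' covering problem to which Theorem~1.3 of~\cite{goeoes13lower} applies. The running time in question, $k-1$, is likewise a constant for fixed $k$, so it sits within the regime covered by that theorem.

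The second step is to pin down the graph family. The proof of \Cref{thm:kroundlower} only uses bounded-degree trees (more precisely, forests, once one combines the instances), so it suffices to work with the family of all forests of maximum degree at most $\Delta$. This family is closed under disjoint union, which is the closure property needed by~\cite{goeoes13lower}, and under the operations used there to relate algorithms on individual instances to algorithms on their union. If the version of Theorem~1.3 being invoked additionally requires the lower-bound instances to be arbitrarily large, one simply takes many disjoint copies of the construction from the proof of \Cref{thm:kroundlower}: since each connected component has radius less than $k$, a $(k-1)$-round algorithm treats the copies independently and must select a dominator in each, so the factor-$\Delta$ gap is preserved while $n$ grows without bound.

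I expect the only non-routine part to be the bookkeeping of the second step: confirming that the exact formulation of Theorem~1.3 in~\cite{goeoes13lower} (its notion of a covering/PO-checkable problem, the admissible identifier range, and the precise closure conditions on the graph class) is met by minimum $k$-hop dominating set on bounded-degree forests. Once this is verified, no further argument is required: \Cref{thm:kroundlower} provides the port-numbering lower bound of $\Delta$ within $k-1$ rounds on this family, and Theorem~1.3 of~\cite{goeoes13lower} upgrades it to the Local Model with unique identifiers.
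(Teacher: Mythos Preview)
Your approach is essentially the same as the paper's: reduce to the port-numbering lower bound via Theorem~1.3 of~\cite{goeoes13lower}, working on the family of forests of bounded maximum degree so that the construction of \Cref{thm:kroundlower} is admissible. One correction: the closure property required by~\cite{goeoes13lower} is not closure under disjoint union but closure under \emph{lifts} (covering maps); the paper's proof makes this explicit, noting that forests of maximum degree at most $\Delta$ are closed under lifts, which is what lets Theorem~1.3 transfer the port-numbering bound to the Local model.
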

\begin{proof}[of \Cref{cor:kroundlower}]
Assume towards a contradiction that there is local algorithm (i.e., one making use of unique IDs) that achieves constant approximation ratio $r(k,f(k))$ on trees of maximum degree $r(k,f(k))+1$.
Then this algorithm achieves the same approximation ratio on forests, as a $k$-hop MDS of a forest is a disjoint union of $k$-hop MDS of its constituent trees, and no communication is possible between different trees. 
As the family of forests of maximum degree $r(k,f(k))+1$ is closed under lifts, Theorem~1.3 in~\cite{goeoes13lower} implies that there is a port numbering algorithm achieving approximation ratio $r(k,f(k))$ on forests of maximum degree $r(k,f(k))+1$.
This is a contradiction to \Cref{thm:kroundlower}, which implies that this algorithm must have approximation ratio at least $r(k,f(k))+1$.
\end{proof}
At first glance, it might seem odd that the lower bound in the port numbering model solely relies on trees.
However, as demonstrated by \Cref{thm:k-1}, excluding trees enables us to compute a constant-factor approximation within $k-1$ rounds.
As it turns out, this improvement is limited to precisely one round.
\begin{proof}[of \Cref{thm:k-1lower}]
The special case $\Delta=2$ is addressed by considering cycles of $n\in (2k+1)\mathbb{N}$ nodes with alternating port numbers, in which a port numbering algorithm selects all nodes, but a $k$-hop MDS has size $n/(2k+1)$.
Hence, suppose that $\Delta\ge 3$ in the following.

\begin{figure}[ht!]
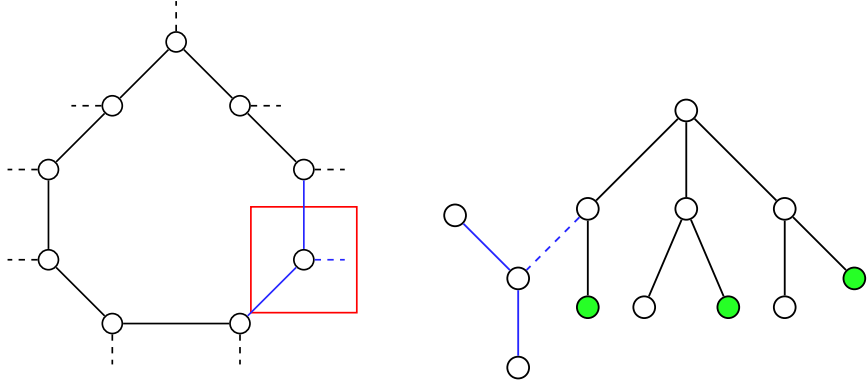

	\centering
	\includegraphics[page=5,width=.4\textwidth]{figs.pdf}\hfill
	\includegraphics[page=6,width=.5\textwidth]{figs.pdf}
	\caption{Lower bound graph of \Cref{thm:k-1lower}, for 2-hop domination and maximum degree 3. On the left, the underlying ring and the sole cycle in the graph is depicted. On the right for one of the nodes the attached tree is depicted, with the root at the top. If the algorithm does not select the root and its children, it must cover each of the nodes marked in green by a different dominator.}\label{fig:lower_kminus1}
\end{figure}

Denote by $T$ a complete rooted tree of depth $k$ and uniform degree $\Delta\ge 3$ at all inner nodes.
Properly edge-color $T$ with $\Delta$ colors, giving rise to a port numbering in which each port number is given by the color of the corresponding edge.
Now take a cycle $C$ of length $g$, replace each of its nodes by a copy of $T$, and choose an arbitrary leaf in each such copy.
For each edge in $C$, connect the chosen leaves of the trees corresponding to the endpoints of the edge.
We make the following observations about the resulting graph $H$, cf.~\Cref{fig:lower_kminus1}.
\begin{itemize}
  \item $H$ is a pseudo-forest with one cycle of length $g$, consisting of the edges corresponding to those of $C$.
  \item $H$ has a $k$-hop dominating set of $g$ nodes, consisting of the roots of the trees.
  \item $H$ has $f(k)=1$: pseudo-forests are closed under taking minors and have no more edges than nodes.
  \item The $(k-2)$-hop neighborhoods of the tree roots and their children are all isomorphic to one another, as any (former) leaf is in distance $k$ from the root.
\end{itemize}
It follows that any port numbering algorithm running in $k-2$ or fewer rounds must either select all roots and their children or none of these nodes.
The former results in approximation ratio at least $\Delta+1$.

We claim that the latter implies approximation ratio at least $\Delta$.
To see this, select in each of the above subtrees rooted at the child of a root one leaf, where for the subtree with the leaf receiving additional cycle edges we maximize the distance to this leaf.
Observe that
\begin{itemize}
  \item any pair of such leafs in the same tree is in distance $2k$, with the only node within distance $k$ of both of them being the root, and
  \item any pair of such leafs in different trees is in distance at least $2\cdot 2(k-1)+1>2k$; here we used that $k\ge 2$ or trivially no $(k-2)$-round algorithms exist.
\end{itemize}
As the roots are not selected, this entails that no selected node $k$-hop dominates more than one of these leaves.
The claimed lower bound of $\Delta$ also follows in this case.
Since $H$ satisfies all constraints imposed by the theorem (girth $g=n/|T|\ge n(\Delta-1)/\Delta^k$, $f(k)\le 1$, maximum degree $\Delta$, and containing a cycle), this completes the proof.
\end{proof}

\begin{corollary}[of Theorem~1.4 in~\cite{goeoes13lower}]\label{cor:k-1lower}
\Cref{thm:k-1lower} applies also when nodes have unique IDs.
\end{corollary}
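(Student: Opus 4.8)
The plan is to mirror the proof of \Cref{cor:kroundlower}, replacing the family of bounded-degree forests by a family of bounded-degree \emph{pseudo-forests} that is closed under lifts and still contains the lower-bound instances of \Cref{thm:k-1lower}. Concretely, I would assume towards a contradiction that some $(k-2)$-round local algorithm $A$ achieves approximation ratio $r=r(k,f(k))$ on every input satisfying the constraints of \Cref{thm:k-1lower}; since those constraints force $f(k)\le 1$, this $r$ is a fixed constant, and I would fix an integer $\Delta:=\max\{3,\lceil r\rceil+1\}$ so that $r<\Delta$. A connected graph of $k$-hop expansion at most $1$ that is not a tree is unicyclic (taking the trivial minor, it has at most as many edges as nodes, hence cyclomatic number one), so the inputs in question are precisely the connected unicyclic graphs of maximum degree $\Delta$ and girth at least the bound $g$ from \Cref{thm:k-1lower}.

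The first step would be to extend $A$ to a $(k-2)$-round local algorithm on \emph{disjoint unions} of such graphs: no message crosses between connected components, and a $k$-hop MDS of a disjoint union is the union of $k$-hop MDSs of its components, so running $A$ componentwise preserves both the round count and the approximation ratio. Let $\mathcal{F}$ be this family (finite disjoint unions of connected unicyclic graphs of maximum degree $\Delta$ and girth at least $g$). The second step would be to check that $\mathcal{F}$ is closed under lifts, as the simulation result requires: a covering map is locally bijective and thus preserves all degrees as well as the edges-to-nodes ratio, so every connected component of a lift of a unicyclic graph is again unicyclic; the maximum degree is unchanged; and the girth cannot decrease, since a shortest cycle of a lift projects to a non-backtracking closed walk of the same length in the base, whose length is at least the girth of the base.

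With these two properties established, Theorem~1.4 in~\cite{goeoes13lower} would yield a $(k-2)$-round port numbering algorithm attaining approximation ratio $r$ on all of $\mathcal{F}$, in particular on the connected unicyclic lower-bound graphs constructed in the proof of \Cref{thm:k-1lower}. Since $r<\Delta$, this contradicts \Cref{thm:k-1lower}.

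I expect the one step that needs care to be the invocation of Theorem~1.4 in~\cite{goeoes13lower}: one must confirm that $\mathcal{F}$ indeed matches its hypotheses (bounded degree, closure under lifts) and that the port numbering algorithm it delivers has the same round complexity and approximation guarantee as the algorithm on $\mathcal{F}$ — any constant-factor loss in the latter would simply be absorbed by choosing $\Delta$ larger. The reduction to disjoint unions and the closure-under-lifts verification are routine, and every structural property of the lower-bound graphs that the contradiction relies on is already available from the proof of \Cref{thm:k-1lower}.
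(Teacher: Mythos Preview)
Your proposal is correct and follows the same overall strategy as the paper: verify that the relevant graph family is closed under lifts and then invoke the G\"o\"os--Suomela simulation theorem to contradict \Cref{thm:k-1lower}. The details differ in two minor ways. First, the paper does not pass to disjoint unions; since it invokes Theorem~1.4 of~\cite{goeoes13lower} (closure under \emph{connected} lifts), it works directly with the family of connected pseudoforests, whereas your disjoint-union step is the manoeuvre tailored to Theorem~1.3 (closure under arbitrary lifts) and is unnecessary here. Second, the closure arguments differ: you observe that a covering map preserves the edge-to-node ratio, so each connected component of a lift of a unicyclic graph is again unicyclic, and you separately check that girth and maximum degree are preserved; the paper instead gives a hands-on argument that iteratively deletes degree-$1$ nodes and their preimages, reducing to the observation that a connected lift of a cycle is a cycle. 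Your edge-density argument is slicker, while the paper's argument is more self-contained; either establishes what is needed.
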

\begin{proof}
The family of connected pseudoforests is closed under connected lifts.
To see this, consider any connected pseudoforest $P$ and its connected lift.
If $P$ is a tree, then its connected lift is also a tree.

Hence, assume that $P$ contains a unique cycle.
We iteratively delete nodes of degree $1$ in $P$ and their pre-images under the lift.
Note that lifts preserve degrees, also when considering subgraphs and induced subgraphs of their pre-images under the lift.
Hence, each deleted node in the pre-image also has degree $1$ at the time of deletion.
Therefore, both graphs remain connected during the process and no cycles are removed.
As $P$ contains a unique cycle, the process stops when only the nodes of this cycle and their pre-images under the lift remain.
Using again that degrees in subgraphs are preserved under the lift, all remaining nodes, including those in the lift, have degree $2$.
As we maintained connectivity also in the lift, the pre-image of the cycle of $P$ is also the unique cycle in the lift.
Therefore, the lift of $P$ is also a connected pseudoforest.

As the graph $H$ constructed in the proof of \Cref{thm:k-1lower} is a connected pseudoforest, we conclude that we can apply Theorem~1.4 in~\cite{goeoes13lower} to prove the claim.
\end{proof}

\section{Tight Approximation}
In this section, we show that the approximation ratio that can be achieved in constant time is $\Theta(k+f(k)^{k/(k+1)})$.
In fact, it turns out that in the port numbering model no algorithm can do better.
In the graph we construct for the lower bound, all views are identical, regardless of the number of rounds of communication.
Hence, unique identifiers or randomization are required to achieve better approximation guarantees.
Several proofs are deferred to the appendix.

\paragraph*{Upper Bound.}
To improve over \Cref{alg:kround}, we first apply the same pruning procedure.
We then select dominators from the candidate set given by the nodes returning true.
To this end, each node $v\in V$ with at least one deleted neighbor takes note of the maximum distance $r_v$ of a node that needs to be covered via its deleted incident edges;
we set $r_v:=0$ for nodes $v\in V$ without neighbor in $V'\setminus V$.
Observe that the components of the subgraph induced by deleted nodes are trees, each of which is connected by a single deleted edge to a non-deleted node.
Hence, there is a minimum $k$-hop dominating set without deleted nodes.
Moreover, for any $k$-hop dominating set that contains no deleted nodes, it is necessary and sufficient that it contains for each $v\in V$ a dominator within distance $k-r_v$.
If for $v\in V$ it holds that $r_v\neq 0$, it simply equals the latest round in which a neighbor is deleted, cf.~\Cref{lem:dom}.
A simple modification of \Cref{alg:kround} outputs $r_v$ for each $v\in V$, see~\Cref{alg:preprocess}. 

\begin{algorithm}
\begin{algorithmic}[1]
\State $r_v:=0$
\For{rounds $r\in \{1,\ldots,k\}$}
  \If{exactly one neighbor has not sent ``del'' to $v$ yet}
    \State send ``del'' to this neighbor
  \EndIf
  \If{received ``del''}
    \State $r_v:=r$
  \EndIf
\EndFor
\If{sent ``del'' and did not receive ``del'' the same round}
  \State \Return{false}
\Else
  \State \Return{$r_v$}
\EndIf
\end{algorithmic}
\caption{Distance-aware pruning, code at v.}\label{alg:preprocess}
\end{algorithm}
\begin{lemma}\label{lem:preprocess}
When executing \Cref{alg:preprocess}, $v\in V$ returns $r_v\in \{0,\ldots,k\}$ if and only if it returns true when \Cref{alg:kround} is run instead.
Moreover, if $V\neq \emptyset$, this return value is equal to the value $r_v$ as defined in \Cref{lem:dom}. 
\end{lemma}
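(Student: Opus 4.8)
The plan is to compare the behaviour of \Cref{alg:preprocess} against that of \Cref{alg:kround} round by round, observing that the two algorithms perform exactly the same ``del''-sending logic in the main loop, so their deletion patterns coincide. Concretely, I would first argue by induction on the round $r\in\{1,\ldots,k\}$ that a node $v$ has sent ``del'' to some neighbor by the end of round $r$ in \Cref{alg:preprocess} if and only if it did so in \Cref{alg:k-1}, and that the set of neighbors from which $v$ has received ``del'' by then is the same in both executions. This is immediate because the condition triggering a ``del'' send --- ``exactly one neighbor has not sent `del' to $v$ yet'' --- depends only on the received-``del'' history, which by the induction hypothesis is identical; the extra line $r_v:=r$ in \Cref{alg:preprocess} has no effect on what is sent. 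Hence after $k$ rounds, the node $v$ has sent a ``del'' message in \Cref{alg:preprocess} iff it did in \Cref{alg:k-1}, which is iff $v\notin V$.

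Next I would reconcile the two different termination conditions. \Cref{alg:k-1} returns \texttt{true} iff $v$ neither sent ``del'' nor has exactly one remaining non-deleted neighbor at the end; \Cref{alg:kround} additionally re-adds a node $v$ that ``sent and received a `del' message in the same round'', and it is exactly these at-most-two last-deleted nodes that distinguish the output of \Cref{alg:kround} from that of \Cref{alg:k-1} (this is established in the proof of \Cref{lem:del}). \Cref{alg:preprocess} returns \texttt{false} precisely when $v$ ``sent `del' and did not receive `del' the same round'', i.e.\ it keeps (returns a value $r_v$) exactly the nodes that \Cref{alg:kround} keeps: the nodes that never sent ``del'' at all, together with any node that sent and received ``del'' in the same round. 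So a node returns a numeric value $r_v\in\{0,\ldots,k\}$ in \Cref{alg:preprocess} iff it returns \texttt{true} in \Cref{alg:kround}, which (by \Cref{lem:del}, and trivially when $V\neq\emptyset$ so that no re-adding occurs) is the set $V$.

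Finally, for the ``moreover'' part, assume $V\neq\emptyset$ and fix $v\in V$. Since the received-``del'' history of $v$ agrees with that of the run of \Cref{alg:k-1}, and the assignment $r_v:=r$ overwrites $r_v$ each time a ``del'' arrives, the value returned is $0$ if $v$ never received a ``del'' message --- which matches the first case of the definition in \Cref{lem:dom}, because by \Cref{obs:nontree} and connectivity $v\in V$ with no deleted neighbor indeed has no neighbor in $V'\setminus V$ --- and otherwise it is the largest round $r$ in which some neighbor sent ``del'' to $v$, which is exactly the definition of $r_v$ in \Cref{lem:dom} (the third, $r_v=k$, case being subsumed: a neighbor in $V'\setminus V$ is deleted in some round $\le k$, and if a neighbor survives to round $k$ it is deleted then).

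The main obstacle, such as it is, lies in the careful bookkeeping of the termination conditions: one must check that \Cref{alg:preprocess}'s single condition ``sent `del' and did not receive `del' the same round'' is genuinely the complement, within the set of all nodes, of ``kept by \Cref{alg:kround}'', including the corner case of the two last-deleted nodes of a tree input and the degenerate case $V'=\{u,v\}$. Everything else is a routine synchronous-round induction, so I would keep that part terse and spend the few extra words on the termination-condition equivalence.
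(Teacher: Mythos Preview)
Your proposal is correct and follows essentially the same approach as the paper: establish that the ``del''-sending behaviour of \Cref{alg:preprocess} coincides with that of \Cref{alg:k-1}/\Cref{alg:kround} over all $k$ rounds, match up the two termination conditions to get the first claim, and then read off that the final value of $r_v$ records the latest round a neighbor was deleted, matching \Cref{lem:dom}. The paper's proof is considerably terser (it simply asserts the equivalence of inclusion criteria and that the $r_v$ update tracks the latest deletion round), and your invocation of \Cref{obs:nontree} in the $r_v=0$ case is unnecessary---the equivalence ``no `del' received $\Leftrightarrow$ no neighbor in $V'\setminus V$'' follows directly from the fact that, under $V\neq\emptyset$, every node of $V'\setminus V$ sends ``del'' within the $k$ rounds of \Cref{alg:preprocess}.
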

\begin{proof}
Compared to \Cref{alg:k-1}, \Cref{alg:kround} adds a $k$-th round of communication in which ``del'' messages are sent by the nodes which were deleted in the $k$-th iteration.
Hence, the inclusion criterion for the set of nodes not returning false, i.e., either not being deleted or being deleted in the same round as the unique neighbor that was not yet deleted in that round, is identical for both algorithms.

\Cref{alg:preprocess} also takes note of the latest round in which a neighbor is deleted by updating $r_v$ whenever a neighbor is deleted.
Provided that $V$ (the nodes that output true when running \Cref{alg:k-1}) is identical to the set of nodes that return non-false values, this matches the definition of $r_v$ in \Cref{lem:dom}: this holds trivially true for $r_v\neq k$ and thus must extend to the remaining case of $r_v=k$.
By \Cref{lem:del}, $V\neq \emptyset$ implies that this is indeed the case.
\end{proof}
\begin{corollary}\label{cor:preprocess}
If $V=\emptyset$, the nodes not returning false when executing \Cref{alg:preprocess} constitute a $k$-hop dominating set of size $2$.
If $V\neq \emptyset$, any $D\subseteq V$ containing for each $v\in V$ some node in distance at most $k-r_v$ of $v$ is a $k$-hop dominating set.
Furthermore, there is a minimum $k$-hop dominating set $M\subseteq V$ of $G'$ with this property.
\end{corollary}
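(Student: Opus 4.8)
The plan is to simply unwind the equivalences established in the preceding lemmas. By \Cref{lem:preprocess}, the set of nodes that return a non-false value when executing \Cref{alg:preprocess} is exactly the set of nodes that return true when running \Cref{alg:kround}, and in the case $V \neq \emptyset$ the returned value coincides with the $r_v$ of \Cref{lem:dom}. So the first statement (the case $V = \emptyset$) follows immediately: by \Cref{lem:del}, \Cref{alg:kround} then outputs a $k$-hop dominating set of size $2$, and this is precisely the set of nodes returning non-false in \Cref{alg:preprocess}.

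For the case $V \neq \emptyset$, I would argue as follows. First, any $D \subseteq V$ that contains, for each $v \in V$, a node within distance $k - r_v$ of $v$ is a $k$-hop dominating set of $G'$: this is exactly the sufficiency direction in \Cref{lem:dom}, using that the value $r_v$ returned by \Cref{alg:preprocess} matches the one in \Cref{lem:dom} (by \Cref{lem:preprocess}), and that $k - r_v \le k$ for every $v$, so every $v \in V$ is itself covered, while the discussion of deleted tree components in the proof of \Cref{lem:dom} shows that covering each $v \in V$ within distance $k - r_v$ suffices to cover all of $V' \setminus V$ as well. Second, the existence of a \emph{minimum} $k$-hop dominating set $M \subseteq V$ with this property is precisely the first statement of \Cref{lem:dom}: that lemma produces a $k$-hop MDS $M \subseteq V$ of $G'$ such that every $v \in V$ has a dominator in $M$ within distance $k - r_v$, and being a $k$-hop MDS it has minimum size among all $k$-hop dominating sets of $G'$.

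There is essentially no obstacle here; the only point requiring a line of care is to observe that the distances (and shortest paths) in the statement are measured in the pruned graph $G$, exactly as in \Cref{lem:dom}, so that the $D$ and $M$ discussed live inside $V$ and the distance bound $k - r_v$ is the one guaranteed by that lemma. In short, the corollary is a restatement of \Cref{lem:del} and \Cref{lem:dom} via the identification of outputs provided by \Cref{lem:preprocess}, packaged in the form needed for the analysis of \Cref{alg:select}.
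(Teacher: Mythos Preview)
Your proposal is correct and follows exactly the approach of the paper, which simply states that the corollary follows from \Cref{lem:preprocess,lem:del,lem:dom}. You have merely spelled out which lemma handles which clause (\Cref{lem:del} for the $V=\emptyset$ case, the two claims of \Cref{lem:dom} for the $V\neq\emptyset$ case, and \Cref{lem:preprocess} to identify the outputs of \Cref{alg:preprocess} with those of \Cref{alg:kround} and the $r_v$ values of \Cref{lem:dom}), which is precisely the intended unpacking.
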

\begin{proof}
Follows from \Cref{lem:preprocess,lem:del,lem:dom}.
\end{proof}

Except for the special case that $V=\emptyset$, in its second phase our algorithm selects a dominator for each $v\in V$.
We choose a dominator of maximum degree (with respect to the subgraph induced by non-deleted nodes) within distance $k-r_v$, where ties are broken by distance and port numberings.
Intuitively, this choice is good, because together with the high girth constraint selecting many high-degree nodes implies a high expansion.
On the other hand, low-degree nodes can only be chosen by nodes for which all ancestors in $T_{m_v}$, the tree of the MDS Voronoi decomposition they participate in, also have low degree.
Together, for any fixed $k$, this results in a bound on the approximation ratio that is sublinear in $f(k)$.

Using suitable pipelining, the implementation of the selection procedure requires only $2k$ rounds and messages of size $\lceil \log \Delta\rceil$.
Broadcasting the maximum known degree in the subgraph induced by non-deleted nodes for $k$ rounds lets each node learn the highest degree node within each distance $d\in \{0,\ldots,k\}$.
Breaking ties by smallest round number and port number for which the respective value was received in the given round over the respective edge, nodes can correctly route selection messages in a pipelined fashion.
The pseudocode for the second step is given in \Cref{alg:select}.
\begin{algorithm}
\begin{algorithmic}[1]
\State $\Delta_0:=|\{p\in \{1,\ldots,\delta(v)\}\,|\,v \mbox{ did not receive ``del'' on port }p\}|$
\For{rounds $r\in \{1,\ldots,k\}$}
  \State send $\Delta_{r-1}$ to all neighbors
  \State set $\Delta_r$ to the maximum of $\Delta_{r-1}$ and all received values and store the smallest respective port number ($0$ if $\Delta_r=\Delta_{r-1}$).
\EndFor
\State set $\Delta(v):=\Delta_{k-r_v}$\algorithmiccomment{here $r_v$ computed by \Cref{alg:preprocess} is used}
\For{rounds $r\in \{k+1,\ldots,2k\}$}
  \If{$\bot\neq \Delta(v)>\Delta_{2k-r}$}
    \State send $\Delta(v)$ to the port stored for $\Delta_{2k-r+1}$ \algorithmiccomment{port is $0$ if $\Delta_{2k-r+1}=\Delta_{2k-r}$}
    \State set $\Delta(v):=\bot$
  \EndIf
  \If{any values are received}
    \State set $\Delta(v)$ to the minimum received value \algorithmiccomment{includes own value if port was $0$}
  \EndIf
\EndFor
\If{$\Delta(v)\neq \bot$}
  \State \Return{true}
\Else
  \State \Return{false}
\EndIf
\end{algorithmic}
\caption{Selection procedure, code at v. Only nodes which did not output false after running \Cref{alg:preprocess} participate. Their output and the information on which ports a ``del'' message was received serves as input. For notational convenience, nodes may fictively send messages ``to themselves'' using port number 0.}\label{alg:select}
\end{algorithm}

We first point out that \Cref{alg:select} correctly handles the special case that $V=\emptyset$.
\begin{corollary}\label{cor:select}
If $V=\emptyset$, \Cref{alg:select} selects two nodes, which form a $k$-hop dominating set.
\end{corollary}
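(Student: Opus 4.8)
The plan is to trace the behavior of \Cref{alg:select} on the handful of nodes that actually participate when $V=\emptyset$. First I would invoke \Cref{cor:preprocess} (equivalently, \Cref{lem:del} together with \Cref{lem:preprocess}) to fix the fact that, when $V=\emptyset$, exactly two nodes $m,m'$ do not return false in \Cref{alg:preprocess}: these are precisely the two nodes that got deleted in the last iteration of the pruning, they are joined by an edge, and together they already constitute a $k$-hop dominating set of $G'$. Since only nodes that did not output false participate in \Cref{alg:select}, the nodes $m$ and $m'$ are the only participants, and since they are adjacent they exchange messages only with each other (every other node stays silent).

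Next I would establish that both $m$ and $m'$ compute $\Delta_0 = 0$. A node deleted in some iteration $\ell \le k$ of the pruning has, by the final round of \Cref{alg:preprocess}, received a ``del'' message across \emph{every} incident edge: all but one of its neighbors sent ``del'' strictly before iteration $\ell$, and the last remaining neighbor does so in iteration $\ell$. Hence there is no port on which such a node failed to receive ``del'', i.e.\ $\Delta_0 = 0$. Plugging $\Delta_0 = 0$ into the broadcast loop of \Cref{alg:select}, both nodes only ever send and receive the value $0$, so $\Delta_r = 0$ for all $r$ and consequently $\Delta(v) = \Delta_{k-r_v} = 0$ for $v \in \{m,m'\}$ regardless of the value of $r_v$ returned by \Cref{alg:preprocess}. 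In the routing loop the guard $\bot \ne \Delta(v) > \Delta_{2k-r}$ reads $0 > 0$ and is thus never satisfied, so neither node ever forwards a value or sets $\Delta(v) := \bot$, and neither receives any value. Therefore $\Delta(v) = 0 \ne \bot$ at termination, and both $m$ and $m'$ return true.

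It then follows that \Cref{alg:select} selects exactly the two nodes $m$ and $m'$, which by the first step form a $k$-hop dominating set of $G'$, as claimed. I do not anticipate any real obstacle; the whole argument is just careful bookkeeping of a degenerate two-node interaction, and the only points that need to be stated cleanly are the $\Delta_0 = 0$ observation, the fact that all non-participating nodes remain silent in \Cref{alg:select}, and the consistency (already given by \Cref{lem:preprocess}) of the return values of \Cref{alg:preprocess} with the guarantees for \Cref{alg:kround}.
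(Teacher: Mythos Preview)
Your proof is correct and follows essentially the same approach as the paper: invoke \Cref{cor:preprocess} to reduce to two adjacent participating nodes that already dominate, then observe that their $\Delta_r$ values are constant so the forwarding guard in the second loop of \Cref{alg:select} never fires and both return true. You are slightly more explicit than the paper in pinning down $\Delta_0 = 0$ (the paper merely asserts $\Delta_0 = \Delta_1 = \cdots = \Delta_r$), but the argument is the same.
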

\begin{proof}
By \Cref{cor:preprocess}, \Cref{alg:preprocess} will delete all but two nodes, which constitute a $k$-hop dominating set.
When executing \Cref{alg:select}, these both will find that $\Delta_0=\Delta_1=\ldots=\Delta_r$, maintain $\Delta(v)$ without sending any messages, and return true.
\end{proof}
Hence, in the following assume that $V\neq \emptyset$ in all statements.
\begin{observation}\label{obs:select}
For each $v\in V$, the local variables $\Delta_r$, $r\in \{0,\ldots,k\}$, store the maximum degree within $r$ hops (with respect to $G$).
\end{observation}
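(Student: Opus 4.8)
The plan is to prove the statement by induction on $r\in\{0,\ldots,k\}$, the inductive invariant being exactly that $\Delta_r$ (as maintained by $v$ in the first loop of \Cref{alg:select}) equals the maximum of $\delta_G(w)$ over all $w$ with $d_G(v,w)\le r$. Throughout I would use that, since $V\neq\emptyset$, by \Cref{lem:del,lem:preprocess} the nodes returning a non-false value in \Cref{alg:preprocess} are precisely those of $V$, so the nodes participating in \Cref{alg:select} are exactly those of $V$, and furthermore no node of $V$ ever sends a ``del'' message during \Cref{alg:preprocess} (if it did, it would have to receive one in the same round from its unique remaining neighbor, which by \Cref{lem:del} only happens when $V=\emptyset$).

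For the base case $r=0$, I would show that the value $\Delta_0$ computed in Line~1 equals the degree $\delta_G(v)$ of $v$ in the pruned graph. It suffices to check that the ports on which $v$ received a ``del'' message in \Cref{alg:preprocess} are exactly those leading to neighbors of $v$ outside $V$. A neighbor in $V$ never sends ``del'' by the above, so in particular never to $v$. Conversely, if a neighbor $u$ of $v$ lies outside $V$, then $u$ returns false and hence sends ``del'' in some round; as $v$ is never deleted, $v$ is a remaining neighbor of $u$ in every round, so the unique remaining neighbor to which $u$ addresses its ``del'' message must be $v$. Thus $\Delta_0$ counts exactly the $G$-neighbors of $v$, which is $\delta_G(v)$.

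For the inductive step $r-1\to r$ with $r\ge 1$, I would invoke the ball-decomposition identity: $d_G(v,w)\le r$ holds if and only if $w=v$ or $d_G(u,w)\le r-1$ for some $G$-neighbor $u$ of $v$. Maximizing $\delta_G(\cdot)$ over this node set shows that the maximum degree within $r$ hops of $v$ equals $\max\bigl(\delta_G(v),\,\max_{u\sim_G v}D_{r-1}(u)\bigr)$, where $D_{r-1}(x)$ denotes the maximum degree within $r-1$ hops of $x$. In round $r$ of \Cref{alg:select}, $v$ sets $\Delta_r$ to the maximum of $\Delta_{r-1}$ and all received values; since only nodes of $V$ participate and hence send, the values received by $v$ in this round are exactly the current $\Delta_{r-1}$-values of its $G$-neighbors. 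Applying the induction hypothesis to $v$ and to each $G$-neighbor $u$, together with $\Delta_{r-1}\ge\Delta_0=\delta_G(v)$, identifies this maximum with $\max\bigl(\delta_G(v),\,\max_{u\sim_G v}D_{r-1}(u)\bigr)$, completing the induction.

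The step I expect to be the main obstacle is the bookkeeping in the base case---pinning down that every ``del'' message received by $v\in V$ originates from a non-$V$ neighbor and that every non-$V$ neighbor addresses its (unique) ``del'' message precisely to $v$---and, relatedly, being careful about the interaction between \Cref{alg:preprocess} and \Cref{alg:select} regarding which nodes participate and hence which messages are actually exchanged along edges of $E$ rather than of $E'$. The inductive step itself is then a routine consequence of the ball-decomposition identity. I would also note, as in the proof of \Cref{lem:usefk}, that $G$ is connected, so that the $r$-hop neighborhoods under consideration lie entirely within $V$ and the statement is meaningful.
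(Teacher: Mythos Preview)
Your proposal is correct and follows exactly the approach of the paper: establish that $\Delta_0$ equals the degree of $v$ in $G$ and then induct on $r$. The paper's proof is much terser---it simply asserts that since $V\neq\emptyset$ the participating nodes are those that never sent ``del,'' hence $\Delta_0$ is the $G$-degree, and then invokes induction without spelling out the step---whereas you carefully justify both why every non-$V$ neighbor of $v$ addresses its ``del'' to $v$ and why the ball-decomposition identity gives the inductive step; your remark about connectivity of $G$ at the end is unnecessary (the $r$-hop balls lie in $V$ by definition of $G$), but harmless.
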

\begin{proof}
As $V\neq \emptyset$, the nodes that execute \Cref{alg:select} are exactly those that did not send ``del'' messages.
Therefore, $\Delta_0$ is set to the degree in the subgraph induced by $V$, i.e., in $G$.
Induction on the round number implies that for $r\in \{1,\ldots,k\}$ the variable $\Delta_r$ is set to the maximum degree within $r$ hops, which is increasing in $r$.
\end{proof}

\begin{lemma}\label{lem:select_dominates}
The set of nodes returning true when executing \Cref{alg:select} after \Cref{alg:preprocess} is a $k$-hop dominating set of $G'$.
For each selected node $v\in V$, there is some $w\in V$ within distance $k-r_w$ of $v$ such that $\Delta_{k-r_w}(w)=\Delta_0(v)$. 
\end{lemma}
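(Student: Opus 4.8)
\textbf{Proof proposal.}
The plan is to verify two things: first, that every node $v \in V$ ends up with its computed $\Delta(v)$ routed back to exactly one node $w$ within distance $k-r_v$ of $v$, and second, that this $w$ is a node of maximum degree among all nodes within that range, so that the set of nodes $w$ receiving at least one message (equivalently, returning true) forms a valid dominating set in the sense of \Cref{cor:preprocess}. I would start from \Cref{obs:select}, which tells us that for each $v\in V$ and each $r\in\{0,\dots,k\}$, the variable $\Delta_r$ stored at $v$ equals the maximum degree in $G$ within $r$ hops of $v$. In particular, when $v$ sets $\Delta(v):=\Delta_{k-r_v}$ in line~5, this value is the maximum degree of a node within distance $k-r_v$, i.e.\ exactly the degree of an eligible dominator of maximum degree for $v$.

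Next I would analyze the routing phase (rounds $k+1,\dots,2k$) and argue that the ``selection message'' carrying $\Delta(v)$ travels monotonically toward a node realizing that maximum degree. The key invariant, proved by induction on the round number $r$ running from $k+1$ to $2k$, is: after round $r$, any node $u$ currently holding a non-$\bot$ value $x$ satisfies $x = \Delta_{2k-r}(u)$ (the max degree within $2k-r$ hops of $u$), and $u$ lies on a shortest path from some originating $v$ toward a fixed node of degree $x$ within distance $k-r_v$ of $v$ — indeed the stored port numbers were chosen (in the broadcast phase) precisely as a first edge on a shortest path to a node achieving the running maximum, with consistent tie-breaking by round number then port number. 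Because the tie-breaking rule is deterministic and uses the same data at both endpoints of each traversed edge, the forwarded messages never ``stall'' except when $\Delta(u) = \Delta_{2k-r}(u)$, which happens exactly when $u$ itself realizes the maximum; at that point $u$ keeps the value (the stored port is $0$) and in the final check returns true. Tracking this for $r = 2k$ gives: each $v$'s message reaches a node $w$ with $\Delta_0(w) = \Delta_{k-r_v}(v) = \Delta_0(v)$ — wait, more precisely $\deg_G(w) = \Delta_0(w)$ equals the max degree within $k-r_v$ of $v$ — and $w$ is within distance $k-r_v$ of $v$, with $r_w$ satisfying $k-r_w \ge$ (distance from $v$ to $w$)? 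Here one must be slightly careful: the claim in the lemma is stated as ``$w$ within distance $k-r_w$ of $v$ with $\Delta_{k-r_w}(w)=\Delta_0(v)$'', so I would show that the node $w$ finally holding the message is itself a maximum-degree node in its own $(k-r_w)$-ball whenever it is selected, which follows because $w$ only retains the value when $\Delta(w)$ was computed from $\Delta_{k-r_w}$ and no strictly larger value reached it — i.e.\ $\Delta_0(w)=\Delta_{k-r_w}(w)$.

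Finally, to conclude the dominating-set property, I would invoke \Cref{cor:preprocess}: the set $D$ of selected nodes contains, for every $v\in V$, a node within distance $k-r_v$ of $v$ (namely the node $w$ that received $v$'s selection message), hence $D$ is a $k$-hop dominating set of $G'$; the special case $V=\emptyset$ is already handled by \Cref{cor:select}. The main obstacle I anticipate is making the routing invariant fully rigorous — in particular showing that the pipelined forwarding with the stated tie-breaking never loses a node's message and never lets it ``overshoot'' a maximum-degree node, and that the distance bounds are respected round-by-round. This requires carefully matching the port numbers stored during the broadcast phase (rounds $1,\dots,k$) with the routing decisions in rounds $k+1,\dots,2k$, and checking that taking minima of received values upon arrival is consistent with the fact that several messages with the same value may converge on one node. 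The high-girth hypothesis is not needed for this lemma — it is only the correctness of pruning and routing that matters here, and that rests on \Cref{obs:select} plus the determinism of the tie-breaking rule.
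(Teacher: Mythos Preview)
Your proposal has the right overall shape but contains a genuine gap in the routing analysis, and the paper's proof handles it by a different (and cleaner) device.

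The invariant you propose --- ``after round $r$, any node $u$ currently holding a non-$\bot$ value $x$ satisfies $x = \Delta_{2k-r}(u)$'' --- is not correct. Take $u$ with $r_u=k$, so initially $\Delta(u)=\Delta_0(u)$. In round $r=k+1$ the send condition is $\Delta(u)>\Delta_{k-1}(u)$, which fails; if $u$ also receives nothing, it still holds $\Delta_0(u)$ after the round, yet $\Delta_{2k-r}(u)=\Delta_{k-1}(u)$ may be strictly larger. More importantly, your plan to show that each $v$'s selection message reaches some $w$ cannot work as stated: when several values arrive at the same node the algorithm keeps only the minimum, so a node's message \emph{can} be lost. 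Domination is not because ``no message is ever lost'' but because whenever a message is discarded, a surviving one sits at the same node.

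The paper formalises exactly this via a token argument. Each $v\in V$ initially holds a token of value $\Delta_{k-r_v}(v)$; sending moves a token to a neighbour, and when several tokens meet only one of minimal value survives. The crucial observation is that node $v$ cannot send \emph{any} token before round $k+r_v+1$: the value it holds is always at most $\Delta_{k-r_v}(v)$ (it only decreases under merges), and by monotonicity of $\Delta_r$ this never exceeds $\Delta_{2k-r}(v)$ while $r\le k+r_v$. Hence at the end of round $k+r_v$ there is still a token at $v$, and in the remaining $k-r_v$ rounds that token (or whatever survives a merge with it) moves at most $k-r_v$ steps. This yields domination via \Cref{cor:preprocess} without tracking individual messages. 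For the second claim the paper then traces a \emph{surviving} token back to its origin $w$ and uses that each forwarding step from a node with stored value $\Delta_{2k-r+1}$ goes to the neighbour that supplied it, giving $\Delta_0(v)=\Delta_{k-r_w}(w)$ for the selected node $v$ and the originating $w$. Your proposal would be repaired by replacing the per-message tracking and the equality invariant with this token-merging viewpoint.
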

\begin{proof}
As $V\neq \emptyset$, by \Cref{cor:preprocess} it is sufficient to select for each $v\in V$ some node within distance $k-r_v$.
Therefore, to prove the claims of the lemma, associate for each $v\in V$ a token of value $\Delta_{k-r_v}$ with $v$ that is moved to the recipient when the node holding the token sends a message.
When two or more tokens meet, i.e., move to or stay at the same node, all but one are removed.
The surviving one is an arbitrary token of minimal value.

Observe that this process exhibits the following properties:
\begin{itemize}
  \item Each node initially holds a token.
  \item Because $\Delta_r$ is increasing in $r$ and tokens get removed only by tokens of smaller value, node $v\in V$ does not send a token before round $k+1-r_v$.
  \item If a token is sent, some neighbor will hold a token at the end of a round.
  \item Together, this implies that at the end of round $2k$, there is a token within distance $2k-k-r_v=k-r_v$ of each node $v\in V$.
  \item A node has $\Delta(v)\neq \bot$ (and hence outputs true) if and only if it has a token.
\end{itemize}
This proves that the output is a $k$-hop dominating set of $G'$.

To prove the second claim of the lemma, trace back the token of each selected node to its origin.
When a token is sent by $v\in V$, it is sent to the neighbor $w\in V$ that caused $\Delta_{2k-r+1}(v)$ to exceed $\Delta_{2k-r}(v)$, by a message with value $\Delta_{2k-r+1}(v)=\Delta_{2k-r}(w)$.
Applying the monotonicity of $\Delta_r$ once more, by induction we conclude that for a non-removed token originating at $v\in V$, the node $w\in V$ holding it in the end satisfies that $\Delta_0(w)=\Delta_{k-r_v}(v)$.
As the token did not move before round $k+1-r_v$, the second claim of the lemma follows. 
\end{proof}

It remains to prove the approximation guarantee.
Denote by $D\subseteq V$ the set of nodes returning true when executing \Cref{alg:select} after \Cref{alg:preprocess}.
For $d\in D$, arbitrarily fix a node $s_d\in V$ that ``selected'' it, i.e., that satisfies that $d$ has maximal degree among all nodes within distance $k-r_{s_d}$ of $s_d$;
such a node exists by \Cref{obs:select} and \Cref{lem:select_dominates}.
Let $S:=\bigcup_{d\in D}\{s_d\}$ and $\Delta\in \mathbb{N}$ be a degree threshold that we will fix later in the analysis.
Abbreviate $D_{\Delta}:=\{d\in D\,|\,\delta(d)\le \Delta\}$ and $S_{\Delta}:=\{s\in S\,|\,s=s_d\mbox{ for some }d\in D_{\Delta}\}$.
Fix a $k$-hop MDS $M$ such that for each $v\in V$, its dominator $m_v$ is in distance at most $k-r_v$;
such an $M$ exists by \Cref{cor:preprocess}.

We account for low-degree nodes in $D$ by bounding the available number of nodes that might select them.
\begin{lemma}\label{lem:low}
\begin{equation*}
|D_{\Delta}|\le
\begin{cases}
(2k+1)|M| &\mbox{if }\Delta \le 2\\
\frac{\Delta}{\Delta-2}\cdot(\Delta-1)^k |M| & \mbox{else.}
\end{cases}
\end{equation*}
\end{lemma}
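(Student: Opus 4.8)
}
The plan is to bound $|D_\Delta|$ by $|S_\Delta|$ and then to distribute $S_\Delta$ over the Voronoi cells $\{T_m\}_{m\in M}$, exploiting that every element of $S_\Delta$ lies in a cell whose path from the root to that element consists solely of nodes of degree at most $\Delta$. I first pin down the choice of $s_d$. Revisiting the token process from the proof of \Cref{lem:select_dominates}, every node starts with exactly one token, tokens are only ever merged, and at the end each node of $D$ holds exactly one token. I would therefore choose $s_d$ to be the origin of the token held by $d$ at the end; by \Cref{lem:select_dominates} and \Cref{obs:select} this $s_d$ satisfies that $d$ has distance at most $k-r_{s_d}$ from $s_d$ in $G$ and that $\delta(d)=\Delta_0(d)=\Delta_{k-r_{s_d}}(s_d)$ equals the maximum degree in $G$ of a node within distance $k-r_{s_d}$ of $s_d$. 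Since distinct elements of $D$ hold tokens residing at distinct nodes and each token has a unique origin, the map $d\mapsto s_d$ is injective, so $|D_\Delta|\le|S_\Delta|$.

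Next I exploit the structure of the MDS Voronoi decomposition. Fix $s=s_d\in S_\Delta$ and let $m:=m_s$ be its dominator. By the choice of $M$ (cf.\ \Cref{cor:preprocess}), $s$ has distance at most $k-r_s$ from $m$ in $G$; since by \Cref{lem:voronoi} the cell $T_m$ induces a breadth-first search tree of depth at most $k$ rooted at $m$, the depth of $s$ in $T_m$ is exactly this distance. Hence every node on the path from $m$ to $s$ in $T_m$ (in particular $m$ and $s$ themselves) has distance at most $k-r_s$ from $s$ in $G$, and therefore degree at most $\delta(d)\le\Delta$. Consequently $S_\Delta\cap T_m$ is contained in the set $U_m$ of those nodes of $T_m$ all of whose $T_m$-ancestors, as well as themselves and $m$, have degree at most $\Delta$. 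The set $U_m$ is a subtree of $T_m$ rooted at $m$ of depth at most $k$ in which $m$ has at most $\Delta$ children and every other node at most $\Delta-1$ children; hence $|U_m|\le 1+\sum_{j=1}^{k}\Delta(\Delta-1)^{j-1}$.

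Finally, since the cells $\{T_m\}_{m\in M}$ partition $V\supseteq S_\Delta$, summing over $m\in M$ yields $|D_\Delta|\le|S_\Delta|=\sum_{m\in M}|S_\Delta\cap T_m|\le|M|\bigl(1+\sum_{j=1}^{k}\Delta(\Delta-1)^{j-1}\bigr)$. For $\Delta\le 2$ the inner sum is at most $2k$, giving $(2k+1)|M|$ (the cases $\Delta\in\{0,1\}$ being trivially covered); for $\Delta\ge 3$ the geometric sum evaluates to $\tfrac{\Delta(\Delta-1)^k-2}{\Delta-2}<\tfrac{\Delta}{\Delta-2}(\Delta-1)^k$, giving the second bound. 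The main obstacle I expect is making the structural step fully rigorous rather than the counting: one must carefully connect the radius $k-r_s$ to tree-depths in $T_m$ and to distances in $G$ (using that $T_m$ is a BFS tree), and confirm that the ``origin'' choice of $s_d$ is both well defined and injective and meets the properties asserted in \Cref{lem:select_dominates}.
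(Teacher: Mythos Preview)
Your proposal is correct and follows essentially the same route as the paper: reduce $|D_\Delta|$ to $|S_\Delta|$, argue that each $s\in S_\Delta$ together with all its ancestors in $T_{m_s}$ has degree at most $\Delta$ (since $m_s$ lies within distance $k-r_s$ of $s$), and then bound $|S_\Delta|$ by $|M|$ times the size of a depth-$k$ tree with inner-node degree $\Delta$. Your version is slightly more explicit about why $d\mapsto s_d$ is injective (tracing token origins) and about relating tree depth to $G$-distance, but the argument and the final geometric-sum computation are the same as the paper's.
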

\begin{proof}
Observe that $|D_{\Delta}|=|S_{\Delta}|$ by construction, so it is sufficient to bound $|S_{\Delta}|$.
By definition of $D_{\Delta}$ and $S_{\Delta}$, all nodes within distance $k-r_s$ of $s\in S_{\Delta}$ have degree at most $\Delta$.
In particular, this applies to $s$ and its ancestors in $T_{m_s}$, i.e., in the tree induced by the Voronoi cell of its dominator, because $m_s$ is within distance $k-r_s$ of $s$.
Thus, each $s\in S_{\Delta}$ is contained in the connected component of $m_s$ in $T_{m_s}$ induced the nodes of degree at most $\Delta$.

We conclude that $|S_{\Delta}|$ is bounded by $|M|$ times the maximum size of a tree of depth $k$ whose inner nodes have uniform degree $\Delta$.
For $\Delta\le 2$, this number is at most $2k+1$, as the tree is then a path.
For $\Delta\ge 3$, we get that
\begin{equation*}
\frac{|S_{\Delta}|}{|M|}
\le 1+\Delta \sum_{i=1}^k (\Delta-1)^{i-1}
\le \Delta (\Delta-1)^{k-1}\sum_{i=0}^{\infty} (\Delta-1)^{-i}
=\frac{\Delta}{\Delta-2}\cdot(\Delta-1)^k.
\end{equation*}
\end{proof}

High-degree nodes in $D$ imply a large number of edges leaving their dominator's tree.
This contributes to increasing the $k$-hop expansion of the graph, i.e., we can upper bound the number of such nodes by means of the $k$-hop expansion $f(k)$.
\begin{lemma}\label{lem:high}
If $\Delta\ge 2$, then $|D\setminus D_{\Delta}|\le 4f(k)|M|/(\Delta-1)$.
\end{lemma}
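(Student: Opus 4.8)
The plan is to charge the high-degree selected nodes against edges leaving their dominators' Voronoi trees, then invoke the $k$-hop expansion bound exactly as in the proof of \Cref{lem:usefk}. First I would recall the setup: for each $d\in D\setminus D_{\Delta}$ we have a selector $s_d\in S$ with $d$ within distance $k-r_{s_d}$ of $s_d$ and $\delta(d)=\Delta_0(s_d)>\Delta$, and by \Cref{lem:select_dominates} together with \Cref{obs:select}, $d$ is in fact within distance $k-r_{s_d}$ of $s_d$ realizing the maximum degree in that ball. Since $m_{s_d}$ is also within distance $k-r_{s_d}$ of $s_d$ and $T_{m_{s_d}}$ has depth at most $k$ (by \Cref{lem:voronoi}), the node $d$ lies in $T_{m_{s_d}}$ or is a neighbor of a node in it; in either case the contraction of $T_{m_{s_d}}$ in the minor $H$ (built as in \Cref{lem:usefk}: delete $V'\setminus V$, contract each $T_m$) produces a node whose degree in $H$ is at least roughly $\delta(d)-1\ge \Delta-1$, because $d$ has $\delta(d)$ neighbours, at most one of which can be another node of the same cell and the rest contribute distinct edges to $H$ (distinctness of the cells reached uses the third part of \Cref{lem:voronoi}, the girth argument).

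Next I would count how many distinct cells $T_m$ receive such a high-degree contribution. The key point is that a single cell $T_m$ can be the dominator-cell of at most... well, it can contain many selected high-degree nodes, but each of them forces $\deg_H(m)\ge \Delta-1$, so this does not immediately give a per-cell bound of one. Instead the right accounting is: sum $\deg_H$ over the (at most $|M|$) cells that contain at least one node of $D\setminus D_\Delta$; each such cell contributes at least $\Delta-1$ to $\sum_{m} \deg_H(m) = 2|E(H)| \le 2f(k)|M|$. To turn "at least one high-degree node" into "at least one cell per constant number of high-degree nodes" I would observe that distinct $d,d'\in D\setminus D_\Delta$ with the same dominator cell are still distinct nodes, but what we actually need is a bound on $|D\setminus D_\Delta|$ itself, not on the number of cells. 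So the cleaner route: each $d\in D\setminus D_\Delta$ is a node of $G$ of degree $>\Delta$, hence contributes at least $\delta(d)-1>\Delta-1$ edges to $H$ after contracting its own cell (edges to other cells; the possible one intra-cell edge is subtracted). Summing, $(\Delta-1)\cdot|\{\text{cells hit}\}| \le 2|E(H)|\le 2f(k)|M|$; but a cell can be hit by several $d$'s, so I must bound the number of $d$'s per cell. Here I would use that within one cell the high-degree nodes $d$ are distinct tree-nodes each of degree $>\Delta$, so the cell's contraction has $H$-degree at least (number of such $d$ in the cell)$\,\cdot(\Delta-1)/$(something) — more carefully, the $\ge \Delta$ external edges of different $d$'s in the same cell are edge-disjoint in $H$ (two cells are joined by at most one edge, but these edges go to different recipient cells or are counted once each), so if a cell contains $t$ nodes of $D\setminus D_\Delta$ its $H$-degree is at least $t(\Delta-1) - \binom{t}{2}$-type corrections; since $H$ is a minor of a high-girth graph this is really just $\ge t(\Delta-1)$ minus at most $t$ for self-edges, giving $\deg_H \ge t(\Delta-2)$ or so. Summing over cells, $(\Delta-2)|D\setminus D_\Delta| \le 2|E(H)| \le 2f(k)|M|$, and with the stated constant $4f(k)|M|/(\Delta-1)$ there is clearly enough slack (for $\Delta\ge 3$; the case $\Delta=2$ makes $D\setminus D_\Delta$ forced by $\delta(d)\ge 2$ for all $v\in V$, which needs a separate one-line argument or is vacuous).

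The main obstacle I anticipate is the bookkeeping of shared edges: making precise that the external edges of the high-degree selected nodes are sufficiently edge-disjoint in the minor $H$ so that a clean $(\Delta-1)$ (or $\Delta-2$) factor survives, rather than losing a factor proportional to how many high-degree nodes share a cell or how many pairs of cells are adjacent. I would resolve this by leaning on \Cref{lem:voronoi}'s third clause (at most one edge between any two cells) plus the high-girth/tree structure of each cell: a high-degree node $d$ in cell $T_m$ has $\delta(d)$ neighbours, at most one in $T_m$, and the other $\ge \Delta$ neighbours lie in $\ge \Delta$ distinct cells (no two in the same cell, since two edges from $T_m$ to $T_{m'}$ are forbidden when those edges are incident to the same node $d$... actually forbidden outright), each contributing a distinct edge of $H$ incident to the contracted image of $T_m$; two different high-degree nodes in the same cell may reach a common cell but via the single permitted $T_m$–$T_{m'}$ edge, so over-counting is at most the number of neighbouring cells, which is itself $\deg_H(m)$ — a self-correcting estimate. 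Pushing this through gives $\deg_H(m)\ge$ (a constant fraction of) $\Delta$ times the number of high-degree selected nodes in $T_m$ up to lower-order terms, and summing against $2|E(H)|\le 2f(k)|M|$ yields the claim with room to spare for the stated constant $4$.
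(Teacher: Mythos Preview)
Your high-level strategy---charge each high-degree selected node to edges of the contracted minor $H$ and bound via $|E(H)|\le f(k)|M|$---matches the paper's, but the execution has a genuine gap. The claim that ``a high-degree node $d$ in cell $T_m$ has $\delta(d)$ neighbours, at most one in $T_m$'' is simply false: $d$ can be an internal node of the tree $T_m$ with all of its $\delta(d)>\Delta$ neighbours inside $T_m$ (one parent, the rest children). Consequently your direct per-node charge of $\ge \Delta-1$ outgoing edges does not exist, and your ``self-correcting'' fix (bounding the over-count by $\deg_H(m)$ itself) does not recover anything: a single cell could contain many high-degree selected nodes stacked along a root-to-leaf path, each sharing the \emph{same} outgoing edges of $T_m$, so the naive sum over $d$'s can exceed $\deg_H(m)$ by an unbounded factor. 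Your earlier claim that $d$ lies in $T_{m_{s_d}}$ is also unjustified (the distances only give $d$ within $2(k-r_{s_d})$ of $m_{s_d}$), though this is not the crux.

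The paper resolves exactly this over-counting with an extra structural step you are missing. It first isolates $D'\subseteq D\setminus D_\Delta$, the high-degree selected nodes having at most one descendant in $D\setminus D_\Delta$ within their own tree $T_{m_d}$. For $d\in D'$ it assigns the subtree $T_d$ rooted at $d$ with the (at most one) offending descendant's branch removed; these subtrees are pairwise disjoint inside each $T_m$, and each has $\ge \Delta-1$ leaves of $T_m$ (from the $\ge \Delta+1$ edges at $d$ minus parent and the removed branch), each of which carries an edge leaving $T_m$ by \Cref{lem:voronoi}. This yields $|D'|\le 2f(k)|M|/(\Delta-1)$ cleanly. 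The remaining nodes $D\setminus(D_\Delta\cup D')$ each have $\ge 2$ descendants in $D\setminus D_\Delta$, so a standard tree-contraction argument (internal vertices of degree $\ge 3$ are outnumbered by leaves) gives $|D\setminus(D_\Delta\cup D')|<|D'|$, and the factor $4$ follows. You should replace the incorrect ``at most one neighbour in $T_m$'' step with this subtree-disjointness argument.
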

\begin{proof}
Let $D'\subseteq D\setminus D_{\Delta}$ the set of nodes $d\in D\setminus D_{\Delta}$ satisfying that there is at most one descendant $d'\in T_{m_d}$ that is also in $D\setminus D_{\Delta}$.
For such a $d\in D'$, let $T_d$ be the subtree rooted at $d$ after removing the subtree rooted at $d'$ (if there is such a descendant $d'$).
Denote by $E_d$ the set of edges that connect a node in $T_d$ to a node outside $T_m$.
We attribute all edges that leave $T_m$ from the subtree rooted at $d$ after removing the subtree rooted at the child of $d$ containing a descendant in $T_m\cap (D\setminus D_{\Delta})$ (if there is one).
Note that by \Cref{lem:voronoi}, each leaf of $T_m$ has at least one incident edge leaving $T_m$ (unless this leaf is $m$ and has no incident edge, which implies that $V=M=\{m\}$).
As $d$ has degree at least $\Delta+1$, this implies that there are at least $\Delta-1$ such edges;
apart from ``losing'' up to one edge due to $d'$, one connects to the parent of $d$.

Observe that each edge attributed to $d$ can be attributed to at most one other node, an ancestor of the edge's endpoint in $T_{m'}\in D\setminus D_{\Delta}$, where $m'\neq m$ is the dominator of the endpoint of the edge.
With this in mind, we contract $T_m$ for all $m\in M$.
As the depth of each $T_m$ is at most $k$, the resulting graph has at most $f(k)|M|$ edges.
Invoking \Cref{lem:voronoi} once more, none of the edges between trees are removed by the contractions, as each of them connects trees $T_m$ and $T_{m'}$ for a unique pair $m,m'\in M$.

Therefore, after the contraction the sum of degrees is at least
\begin{equation*}
\sum_{d\in D'}|E_d|\ge \sum_{d\in D'}\Delta-1 = (\Delta-1)|D'|.
\end{equation*}
Applying the upper bound on the number of edges, we conclude that $|D'|\le 2f(k)|M|/(\Delta-1)$.

Finally, it remains to bound $|D\setminus (D_{\Delta}\cup D')|$.
To this end, observe that by definition of $D'$, each $d\in D\setminus (D_{\Delta}\cup D')$ has at least $2$ descendants in $D\setminus D_{\Delta}$.
Thus, contracting in each $T_m$ edges with at most one endpoint in $D\setminus D_{\Delta}$ until this process stops, we end up with a forest with the following properties.
\begin{itemize}
  \item The node set can be mapped one-on-one to $D\setminus D_{\Delta}$.
  \item Nodes corresponding to those of $D'$ have degree at most $2$.
  \item Nodes corresponding to those of $D\setminus (D_{\Delta}\cup D')$ have degree at least $3$.
\end{itemize}
Further removing degree-$2$ nodes by contractions thus results in a forest in which inner nodes map one-on-one to those of $D\setminus (D_{\Delta}\cup D')$ and the number of leaves is bounded from above by $|D'|$.
Therefore,
\begin{equation*}
|D\setminus D_{\Delta}|= |D'| + |D\setminus (D_{\Delta}\cup D')|<2|D'|\le \frac{4f(k)|M|}{\Delta-1}.
\end{equation*}
\end{proof}

Choosing $\Delta$ suitably, we arrive at \Cref{thm:3k}.
\begin{proof}[of \Cref{thm:3k}]
By \Cref{cor:select} and \Cref{lem:select_dominates}, running \Cref{alg:preprocess} followed by \Cref{alg:select} results in a $k$-hop dominating set.
The running time and message size are immediate from the pseudocode.
If $f(k)^{1/{k+1}}<2$, choosing $\Delta=2$ and applying \Cref{lem:low,lem:high} yields approximation ratio $O(k+f(k))=O(k+f(k)^{k/(k+1)})$.
If $f(k)^{1/(k+1)}\ge 2$, choosing $\Delta:=\lfloor f(k)^{1/(k+1)}\rfloor$ and applying \Cref{lem:low,lem:high} yields approximation ratio $O(k+f(k)/\Delta)=O(k+f(k)^{k/(k+1)})$.
\end{proof}

\paragraph*{Lower Bound}
Our lower bound is based on a graph of uniform degree $\Delta$.
We first prove a helper statement relating the $k$-hop expansion to this degree bound.
\begin{lemma}\label{lem:contract}
Any graph has $k$-hop expansion $f(k)\le \Delta(\Delta-1)^k/2$.
\end{lemma}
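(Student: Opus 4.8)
We need to show that for any graph $G'$ (here the maximum degree is $\Delta$), the $k$-hop expansion satisfies $f(k) \le \Delta(\Delta-1)^k/2$. Recall $f(k)$ is defined as the maximum, over all minors $H$ obtained by deleting nodes and contracting disjoint subgraphs of radius at most $k$, of the ratio (edges of $H$)/(nodes of $H$). So it suffices to bound this ratio for an arbitrary such minor $H$.

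**Plan.** Let me think about how I would prove this.

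The plan is to take an arbitrary minor $H = (V_H, E_H)$ obtained from $G'$ by deleting nodes and contracting disjoint radius-$\le k$ subgraphs, and bound $|E_H|/|V_H|$ by $\Delta(\Delta-1)^k/2$. First I would observe that each node of $H$ is the image of a (possibly trivial) subgraph of $G'$ of radius at most $k$; since $G'$ has maximum degree $\Delta$, a ball of radius $k$ around any center contains at most $1 + \Delta\sum_{i=0}^{k-1}(\Delta-1)^i$ nodes — but more to the point, I want to bound the number of edges of $G'$ leaving such a subgraph. A radius-$k$ subgraph $C$ is contained in the ball of radius $k$ around some center $c$; the number of edges with at least one endpoint in $C$ that leave $C$ is at most the number of edges leaving the ball $B_k(c)$ together with internal edges — actually the cleanest bound is: the number of edges incident to $B_k(c)$ is at most (number of nodes in $B_{k+1}(c)$ that have an edge into $B_k(c)$)... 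Let me instead count edges leaving $B_k(c)$: each such edge has an endpoint in the "sphere" $S_k(c)$ of nodes at distance exactly $k$, and $|S_k(c)| \le \Delta(\Delta-1)^{k-1}$, each contributing at most $\Delta - 1$ outgoing edges (one edge on a BFS path goes inward), giving at most $\Delta(\Delta-1)^k$ edges leaving $B_k(c)$, hence at most that many leaving $C$.

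**Key steps.** (1) Fix a minor $H$. Each node $h \in V_H$ corresponds either to a single surviving node of $G'$ or to a contracted subgraph $C_h$ of radius $\le k$, which lies inside a ball $B_k(c_h)$ for some center. (2) Bound the number of edges of $G'$ with exactly one endpoint in $C_h$ by $\Delta(\Delta-1)^k$: such an edge must cross out of $B_k(c_h)$, and the sphere $S_k(c_h)$ has at most $\Delta(\Delta-1)^{k-1}$ vertices, each with at most $\Delta - 1$ edges leading away from $c_h$. (For $k=0$ the bound is $\Delta$, which is $\Delta(\Delta-1)^0$, consistent; one should double-check the degenerate small cases.) (3) Every edge of $H$ is the image of an edge of $G'$ joining two distinct contracted parts, hence is counted in the "leaving" set of each of its two endpoint-parts. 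Summing, $2|E_H| \le \sum_{h \in V_H} (\text{edges leaving } C_h) \le \sum_{h} \Delta(\Delta-1)^k = \Delta(\Delta-1)^k |V_H|$. (4) Divide by $2|V_H|$ to get $|E_H|/|V_H| \le \Delta(\Delta-1)^k/2$, and since $H$ was arbitrary, $f(k) \le \Delta(\Delta-1)^k/2$.

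**Main obstacle.** The only subtlety I anticipate is being careful about the sphere-size bound in edge cases: for a center of degree exactly $\Delta$, $|S_1(c)| = \Delta$ and subsequent spheres grow by at most a factor $\Delta - 1$, so $|S_i(c)| \le \Delta(\Delta-1)^{i-1}$ for $i \ge 1$; and each vertex at distance $k$ has at least one neighbor at distance $k-1$, leaving at most $\Delta - 1$ edges going to distance $k$ or $k+1$. I would also make sure the bound still holds when $C_h$ is a single vertex (it has at most $\Delta$ incident edges, and $\Delta \le \Delta(\Delta-1)^k$ for $\Delta \ge 2$; the case $\Delta \le 1$ is trivial since then $H$ has no edges). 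Packaging the sphere/ball counting cleanly, rather than routing through the full ball size, is what keeps the constant tight at $\Delta(\Delta-1)^k$ rather than something larger.
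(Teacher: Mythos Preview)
Your overall strategy---bound the degree of every supernode of the minor $H$ by $\Delta(\Delta-1)^k$ and then halve---is exactly the paper's. The gap is in step~(2): the claim ``such an edge must cross out of $B_k(c_h)$'' is false. The contracted piece $C_h$ is only required to have radius at most $k$ \emph{as a subgraph}; it can be a proper subset of the ambient ball $B_k(c_h)$ in $G'$, and an edge leaving $C_h$ may land in $B_k(c_h)\setminus C_h$ without ever exiting the ball. (Take $C_h=\{c_h\}$ inside a graph where $B_k(c_h)$ is everything: all $\deg(c_h)$ edges leave $C_h$, none leave $B_k(c_h)$.) You notice something is off in your ``plan'' paragraph and again when you special-case a single vertex, but the sphere count you settle on genuinely only bounds the edges leaving the full ball, not the edges leaving an arbitrary radius-$k$ subgraph.

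The paper deals with exactly this by an extremal reduction rather than a direct count. Starting from any pair (graph of maximum degree $\Delta$, radius-$\le k$ subgraph), it repeatedly subdivides an outgoing edge whose inner endpoint lies at depth $<k$ and absorbs the new midpoint into the subgraph; this keeps the number of outgoing edges and the degree bound, and after finitely many steps every outgoing edge leaves from depth exactly $k$. A second reduction replaces the subgraph by its BFS tree, converting internal non-tree edges into additional outgoing edges---which can only help a maximizer, so the maximizing subgraph is already a tree. Only then is one in the situation your sphere count assumes, and the bound $\Delta(\Delta-1)^{k-1}\cdot(\Delta-1)$ follows. If you prefer a direct repair without modifying the graph: since $C_h$ is connected, the number of edges leaving it is at most $(\Delta-2)|C_h|+2$, and bounding $|C_h|$ by the size of a depth-$k$ tree of maximum degree $\Delta$ gives precisely $\Delta(\Delta-1)^k$.
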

\begin{proof}
We upper bound the number of outgoing edges of a subgraph of radius $k$.
We claim that there is a subgraph maximizing this value such that that all outgoing edges have the endpoint in the subgraph in distance exactly $k$ from the center.
To see this, consider any subgraph not satisfying this property, choose any endpoint of an outgoing edge in distance smaller than $k$, subdivide the edge by inserting a new node, and add this node to the subgraph.
Repeating this process until this is no longer possible results in a subgraph with the same number of outgoing edges and the above property.

Next, note that such a maximizing subgraph is a tree of depth $k$ rooted at the center.
To see this, take any subgraph of radius $k$, construct a breadth-first search tree starting from its center, delete all other internal edges, and add for each of them an outgoing edge for each of its endpoints.
Since the graph is maximal, this cannot add any outgoing edges, so it must be the case that the subgraph already was a tree.

Finally, observe that any tree of depth at most $k$ and maximum degree $\Delta$ has at most $\Delta(\Delta-1)^{k-1}$ leaves.
Thus, the above properties imply that the number of outgoing edges is bounded by $\Delta(\Delta-1)^k$.
We conclude that deleting nodes and contracting subgraphs of radius at most $k$ results in a graph of maximum degree at most $\Delta(\Delta-1)^k$.
As the sum of degrees is twice the number of edges, it follows that $f(k)\le \Delta(\Delta-1)^k/2$.
\end{proof}

We proceed to constructing the lower bound graph.
\begin{lemma}\label{lem:lb_graph}
Let $\Delta,k\in \mathbb{N}$ with $\Delta\ge 3$.
Then there is a graph of uniform degree $\Delta$, girth at least $g:=4k+3$, $k$-hop expansion $f(k)\le \Delta(\Delta-1)^k/2$, and a $k$-hop dominating set of size $O(|V|\Delta/f(k))$.
\end{lemma}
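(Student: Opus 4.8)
The plan is to adapt Erdős' classical probabilistic construction of graphs with uniform degree $\Delta$ and girth at least $g=4k+3$, but to \emph{plant} a small $k$-hop dominating set into the construction so that the final graph provably has a dominating set of size $O(|V|\Delta/f(k))$. Concretely, I would start not from an arbitrary sparse graph but from a forest $F$ consisting of $N$ disjoint copies of a complete rooted tree $T$ of depth $k$ in which every inner node has degree exactly $\Delta$; the roots of these trees will form the planted $k$-hop dominating set $M$, so $|M|=N$. Every inner node of $F$ already has its final degree $\Delta$; only the leaves are degree-deficient. The number of leaves per tree is $\Delta(\Delta-1)^{k-1}$, so $|V| = N\cdot\bigl(1 + \Delta\sum_{i=0}^{k-1}(\Delta-1)^{i}\bigr) = \Theta(N(\Delta-1)^{k})$, and in particular $N = \Theta(|V|/(\Delta-1)^{k}) = O(|V|\Delta/f(k))$ using $f(k)\le \Delta(\Delta-1)^k/2$ from \Cref{lem:contract}. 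Thus the size bound on the dominating set is automatic \emph{provided} the construction terminates with $M$ still $k$-hop dominating, which it will because we never touch the tree edges and every node of $F$ is within distance $k$ of its root.

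Next I would run Erdős' inductive degree-raising argument, but restricted to the leaf set $L$ of $F$: as long as some leaf still has degree less than $\Delta$, I want to add a perfect (or near-perfect) matching on the still-deficient leaves that creates no short cycle. Following the English exposition in~\cite{coupette20breezing} of~\cite{erdoes63abschaetzung}, one argues via a counting/probabilistic bound: the number of pairs of deficient leaves whose addition would close a cycle of length $\le g-1$ is small compared to the number of available leaves, because each such pair must be joined by a path of length $\le g-2$ in the current graph, and the current graph has bounded degree $\Delta$, so the number of ``forbidden'' partners of a fixed leaf is at most roughly $\Delta(\Delta-1)^{g-3}$, a constant. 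Since in each round we have $\Omega(N)$ deficient leaves (we choose $N$ large enough relative to $\Delta,k$), a greedy / alteration argument produces a matching among deficient leaves avoiding all forbidden pairs, raising every matched leaf's degree by one. After at most $\Delta-1$ such rounds every leaf has degree $\Delta$ and the graph is $\Delta$-regular; the tree edges are untouched throughout, so $M$ is still a $k$-hop dominating set and the girth is still $\ge g$.

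Finally I would collect the four claimed properties. Uniform degree $\Delta$ holds by construction at termination. Girth $\ge g = 4k+3$ holds because $F$ itself is a forest (girth $\infty$) and every added edge was chosen to avoid closing any cycle of length $< g$. The $k$-hop expansion bound $f(k)\le \Delta(\Delta-1)^k/2$ is exactly \Cref{lem:contract} applied to a $\Delta$-regular graph. And the planted set $M$ of tree roots $k$-hop dominates $V$ with $|M| = N = O(|V|\Delta/f(k))$ as computed above. The main obstacle is the leaf-restricted degree-raising step: one must verify that confining Erdős' argument to the leaf set $L$ does not break it — in particular that $|L|$ stays $\Omega(N)$ and $\Omega(|V|)$ across all $\Delta-1$ rounds, that ``forbidden partner'' counts remain bounded uniformly in $N$ (they depend only on $\Delta,k$), and that no parity obstruction prevents an (almost) perfect matching on the deficient leaves; handling a possible single leftover leaf of degree $\Delta-1$ per round, or allowing $|V|$ to be chosen to make all counts even, is a routine but necessary fix. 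I would state this as choosing $N$ (equivalently $|V|$) sufficiently large as a function of $\Delta$ and $k$, which is harmless since the lemma only asserts existence.
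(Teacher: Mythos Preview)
Your proposal is correct and follows essentially the same approach as the paper: start from a forest of $m$ complete depth-$k$ trees with inner-node degree $\Delta$, take the roots as the planted $k$-hop dominating set, and then raise the leaf degrees to $\Delta$ via Erd\H{o}s--Sachs while preserving girth $\ge g$; the expansion bound comes directly from \Cref{lem:contract}. The only cosmetic difference is in the execution of the degree-raising step: you sketch it as iteratively finding a (near-)perfect matching on the deficient leaves that avoids all forbidden pairs, whereas the paper uses the equivalent extremal formulation---take a graph maximizing the number of added edges subject to leaf degrees in $\{i,i+1\}$ and girth $\ge g$, and derive a contradiction via an edge swap if any leaf still has degree $i$---which sidesteps the parity bookkeeping you flag by a direct argument that the number of degree-$i$ leaves is even (the paper also takes $m\in 2\mathbb{N}$, which makes the total leaf count even).
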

\begin{proof}
Suppose that $m:=m(\Delta,k)\in 2\mathbb{N}$ is sufficiently large.
Let $G_1$ be the disjoint union of $m$ rooted trees of depth $k$, where inner nodes have degree $\Delta$.
Clearly, $G_1$ has a $k$-hop dominating set of $m$ nodes, and adding edges between leaves of the trees will not change this.
Note that each tree has $1+\sum_{i=0}^{k-1} \Delta(\Delta-1)^i \in \Theta(\Delta(\Delta-1)^{k-1})=\Omega(f(k)/\Delta)$ nodes.
Hence, $m\in O(|V|\Delta/f(k))$.

We claim that we can add edges between leaves in $G_1$ to obtain a graph $G_{\Delta}$ in which these nodes have all degree $\Delta$ and the girth is at least $g$.
By \Cref{lem:contract}, $G_{\Delta}$ then also has $k$-hop expansion $f(k)$.
As adding edges cannot remove the property that a set is a $k$-hop dominating set, proving the claim will prove the lemma.

We prove the claim inductively, by constructing $G_i$, $i\in \{2,\ldots,\Delta\}$, where $G_i$ is $G_1$ with additional edges between leaves of $G_1$ such that these nodes all have degree $i$.
The base case of $i=1$ is given by $G_1$.
For the step from $i\in \{1,\ldots,\Delta-1\}$ to $G_{i+1}$, let $G$ be a graph maximizing the number of edges among all graphs that are $G_1$ with additional edges between leaves so that (i) the degree in $G_i$ of leaves of $G_1$ is between $i$ and $i+1$ and (ii) girth is at least $g$.
By the induction hypothesis, $G_i$ is $G_1$ with additional edges and meets conditions (i) and (ii), implying that $G$ is well-defined.
We will show that the minimum degree of $G$ is $i+1$, completing the induction step and thereby the proof.

To this end, assume for contradiction that there is a node of degree $i$ in $G$.
Denote by $G'=(V',E')$ the subgraph of $G$ induced by the leaves in $G_1$.
It contains $\ell\ge 1$ nodes of degree $i-1$ and $|V'|-\ell$ nodes of degree $i$.
As the number of leaves in $G_1$ is a multiple of $m\in 2\mathbb{N}$, $|V'|$ is even.
Using that $2|E'|=(i-1)\ell+i(|V'|-\ell)=i|V'|-\ell$, we conclude that $\ell$ is also even.
In particular, $\ell\ge 2$, i.e., there are at least two nodes of degree $i$ in $G$.

Denote by $v,w\in V'$ two nodes of degree $i$ in $G$.
Denote by $N$ the set of nodes with distance at most $g-2$ in $G$ from both $v$ and $w$.
If there is $x\notin N$ of degree $i-1$, we can add an edge from $x$ to $v$ or $w$ without decreasing the girth below $g$, contradicting the maximality of $G$.
Thus, each node in $V'\setminus N$ has degree $i+1$ in $G$ and $i$ neighbors in $V'$.
On the other hand, the constraints on $G$ imply that nodes in $V'\setminus N$ have at most $i$ neighbors in $V'$.

Observe that the size of $N$ is bounded by $1+\sum_{i=1}^{g-2}\Delta(\Delta-1)^{i-1}=1+\sum_{i=1}^{4k}\Delta(\Delta-1)^{i-1}$.
Thus, as $m(\Delta,k)$ is sufficiently large, $2|N|\le |V'|$.
It follows that $i|N\cap V'|< i|N|\le i|V'\setminus N|$.
Therefore, $G$ must contain an edge $\{x,y\}$ with both endpoints $x,y\in V'\setminus N$.
However, this also leads to a contradiction to the maximality of $G$:
We can delete $\{x,y\}$ and add $\{v,x\}$ and $\{w,y\}$; there can be no short cycle involving only one of the two new edges, since the distances from $v$ to $x$ and from $w$ to $y$ are at least $g-1$, and there can be no short cycle involving both $\{v,x\}$ and $\{w,y\}$, since otherwise $G\setminus \{x,y\}$ would contain a short path from $x$ to $y$, i.e., $G$ would have too small girth.
\end{proof}
Considering the bipartite double cover, we can ensure that the constructed graph can be $\Delta$-edge colored.
\begin{lemma}\label{lem:bipartite}
There is a graph as in \Cref{lem:lb_graph} that is also balanced bipartite.
\end{lemma}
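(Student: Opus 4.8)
The plan is to obtain the desired graph as the \emph{bipartite double cover} of the graph $G_\Delta$ produced by \Cref{lem:lb_graph}. Concretely, I would let $H$ have vertex set $V(G_\Delta)\times\{0,1\}$, with $\{(u,b),(v,1-b)\}$ an edge of $H$ exactly when $\{u,v\}\in E(G_\Delta)$. Several properties are then immediate from the definition and I would record them first: $G_\Delta$ is simple (its girth is at least $4k+3\ge 3$), hence $H$ is simple; each vertex $(v,b)$ has precisely the $\Delta$ neighbours $(u,1-b)$ corresponding to the $\Delta$ neighbours $u$ of $v$ in $G_\Delta$, so $H$ is $\Delta$-regular; and $H$ is bipartite with the two sides $V(G_\Delta)\times\{0\}$ and $V(G_\Delta)\times\{1\}$, which have equal cardinality, so $H$ is balanced bipartite. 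The $k$-hop expansion bound then comes for free: since $H$ is $\Delta$-regular, \Cref{lem:contract} gives $f(k)\le \Delta(\Delta-1)^k/2$.

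Next I would verify that the girth of $H$ is still at least $g=4k+3$. Any cycle $C=(w_0,\dots,w_{\ell-1},w_0)$ of $H$ projects, by dropping the second coordinate, to a closed walk $W=(v_0,\dots,v_{\ell-1},v_0)$ of the same length $\ell$ in $G_\Delta$. Because $H$ is bipartite, $\ell\ge 4$, so $w_i\neq w_{i+2}$ (indices mod $\ell$); since the second coordinate flips with every step, $w_i$ and $w_{i+2}$ have the same second coordinate, and therefore $v_i\neq v_{i+2}$. Hence $W$ uses no loop and never repeats an edge in two consecutive steps. Taking a repeated vertex $v_i=v_j$ with $j-i$ minimal then yields a genuine cycle $v_i,\dots,v_j$ of $G_\Delta$ whose length $j-i$ is at least $3$ (the cases $j-i=1$ and $j-i=2$ being ruled out by, respectively, the absence of loops and the absence of consecutive repeated edges) and at most $\ell$. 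Consequently $\ell\ge j-i\ge \mathrm{girth}(G_\Delta)\ge g$, so every cycle of $H$ has length at least $g$.

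For the small dominating set I would lift the one from \Cref{lem:lb_graph}. Let $M$ be the $k$-hop dominating set of $G_\Delta$ of size $O(|V(G_\Delta)|\Delta/f(k))$, and set $M':=M\times\{0,1\}$. Given $(v,b)\in V(H)$, pick $m\in M$ dominating $v$ in $G_\Delta$ via a shortest path $m=u_0,u_1,\dots,u_d=v$ with $d\le k$, and lift it so that it ends with second coordinate $b$: this gives a path $(u_0,c_0),\dots,(u_d,c_d)$ in $H$ of length $d\le k$ with $(u_0,c_0)\in M'$, so $M'$ is a $k$-hop dominating set. Its size is $2|M|$, and since $|V(H)|=2|V(G_\Delta)|$ this is again $O(|V(H)|\Delta/f(k))$; here I would point out that the bound in \Cref{lem:lb_graph} actually follows from the vertex count together with $f(k)\le\Delta(\Delta-1)^k/2$ (as in its proof), so doubling both quantities preserves it. Combining the three paragraphs, $H$ meets every requirement.

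I do not expect a conceptual obstacle here; the delicate part is bookkeeping — making the walk-to-cycle step and its base cases fully rigorous, and confirming the domination bound against the precise meaning of $f(k)$ used in \Cref{lem:lb_graph}. As a fallback avoiding the double cover, one could instead $2$-colour the $m\in 2\mathbb{N}$ trees of $G_1$ in \Cref{lem:lb_graph} so that their leaves split into two equal-size classes $L_0,L_1$, and rerun that lemma's inductive edge-addition using only $L_0$--$L_1$ edges; the handshake identity on the bipartite leaf subgraph then forces equally many degree-deficient vertices on each side, which is exactly what the maximality argument in \Cref{lem:lb_graph} requires. I would nonetheless prefer the double-cover route, as it needs no reproof of \Cref{lem:lb_graph}.
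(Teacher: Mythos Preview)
Your proposal is correct and follows essentially the same approach as the paper: take the bipartite double cover of the graph from \Cref{lem:lb_graph}, observe that regularity, balanced bipartiteness, and the expansion bound via \Cref{lem:contract} are immediate, lift the dominating set to both copies, and project a short cycle in the cover to a short closed walk in the base. Your girth verification is in fact more carefully spelled out than the paper's (which jumps directly from ``closed walk of length $\ell$'' to ``cycle of length $\ell<g$''), and your fallback sketch is unnecessary.
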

\begin{proof}
Consider the bipartite double cover $C=(V_C,E_C)$ of the graph $G=(V,E)$ from \Cref{lem:lb_graph}.
That is, $V_C:=V_1\cup V_2$, where $V_i:=\{v_i\,|\,v\in V\}$, $i=\{1,2\}$, are two distinct copies of $V$, and
$E_C:=\bigcup_{\{v,w\}\in E}\{v_1,w_2\}\cup\{v_2,w_1\}$.
Clearly, $|V_1|=|V_2|$ and $C$ can be properly $2$-colored by assigning color $i$ to $V_i$.
Moreover, since $G$ has uniform degree $\Delta$, so has $C$, and by \Cref{lem:contract} this implies the desired bound on $f(k)$.
For any $k$-hop dominating set $D$, $\{d_1\,|\,d\in D\}\cup \{d_2\,|\,d\in D\}$ is a $k$-hop dominating set as well, showing that there is a small dominating set of $C$.

Finally, assume towards a contradiction that $C$ has a cycle of length $\ell<g$, say $(v_1^{(1)},v_2^{(2)},v_1^{(3)},\ldots,v_2^{(\ell)},v_1^{(1)})$.
Then $(v^{(1)},v^{(2)},\ldots,v^{(\ell)},v^{(1)})$ is a walk of length $\ell$ in $G$ that starts and ends at the same node, i.e., there is a cycle of length $\ell<g$ in $G$.
This is a contradiction, completing the proof.
\end{proof}

\begin{corollary}\label{cor:bipartite}
The graph given by \Cref{lem:bipartite} can be properly $\Delta$-edge colored.
\end{corollary}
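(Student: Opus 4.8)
The plan is to invoke a classical result of König stating that every bipartite graph of maximum degree $\Delta$ has a proper edge coloring with exactly $\Delta$ colors, i.e., its edge chromatic number equals its maximum degree. The graph $C$ produced by Lemma~\ref{lem:bipartite} is bipartite and has \emph{uniform} degree $\Delta$, so in particular maximum degree $\Delta$; hence König's edge-coloring theorem applies directly and yields a proper $\Delta$-edge coloring.

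If one prefers a self-contained argument rather than citing König, I would proceed as follows. First, observe that a $\Delta$-regular bipartite graph decomposes into $\Delta$ perfect matchings: by Hall's theorem a regular bipartite graph has a perfect matching (for any subset $S$ of one side, counting edges incident to $S$ gives $\Delta|S| = \sum_{v \in N(S)}(\text{edges from } v \text{ to } S) \le \Delta|N(S)|$, so $|N(S)| \ge |S|$), and removing a perfect matching leaves a $(\Delta-1)$-regular bipartite graph; iterating $\Delta$ times partitions $E_C$ into $\Delta$ perfect matchings. Assigning a distinct color to each matching class gives the desired proper $\Delta$-edge coloring, since each matching is by definition a set of pairwise non-incident edges.

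There is essentially no obstacle here: the only thing to check is that Lemma~\ref{lem:bipartite} indeed guarantees both bipartiteness and $\Delta$-regularity, which it does (it inherits uniform degree $\Delta$ from the underlying graph of Lemma~\ref{lem:lb_graph} and is balanced bipartite by construction of the double cover). The corollary is therefore an immediate consequence. The coloring will be used in the lower-bound construction to assign port numbers by edge color, so that every node sees an identical port-numbered view regardless of the number of communication rounds.
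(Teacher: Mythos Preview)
Your proposal is correct and matches the paper's own proof almost exactly: the paper also applies Hall's marriage theorem to extract a perfect matching from the $\Delta$-regular bipartite graph, removes it, and repeats inductively to obtain the $\Delta$-edge coloring. Your additional remark that this is just König's edge-coloring theorem is a fine shortcut, but the self-contained Hall-plus-induction argument you spell out is precisely what the paper does.
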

\begin{proof}
As the graph is bipartite and has uniform degree, Hall's marriage theorem implies that it contains a perfect matching.
Assign color $1$ to these edges and delete them, resulting in a graph of uniform degree $\Delta-1$.
Inductive repetition yields the desired coloring.
\end{proof}
\Cref{thm:lower} now readily follows, as choosing port numbers in a graph of uniform degree $\Delta$ according to a $\Delta$-edge coloring results in identical views at all nodes.
\begin{proof}[of \Cref{thm:lower}]
Consider the graph given by \Cref{lem:bipartite} with port numbers matching the colors of a proper $\Delta$-edge coloring, which is feasible by \Cref{cor:bipartite}.
In this graph, all nodes have identical views, regardless of the number of rounds of computation.
In more detail, initially all nodes have identical state.
Thus, on each port $i\in \{1,\ldots,\Delta\}$, each node sends the same message, meaning that each node receives this message on its port $i$.
By induction, this implies that for each round of computation, all nodes end up in the same state.
Hence, any port numbering algorithm must select all nodes into a $k$-hop dominating set, yet only $O(|V|\Delta/f(k))$ many are required.
As the graph of \Cref{lem:bipartite} satisfies that $f(k)=\Delta(\Delta-1)^k/2$, for any choice of $\Delta\ge 2$ the approximation ratio is $\Omega((\Delta-1)^k)=\Omega(f(k)^{k/(k+1)})$.
\end{proof}
Applying~\cite{goeoes13lower}, this result extends to constant-round algorithms in the Local model.
\begin{corollary}
Every algorithm with running time depending on $k$ and $f(k)$ has approximation ratio $\Omega(f(k)^{k/(k+1)})$, even if nodes have unique identifiers.
\end{corollary}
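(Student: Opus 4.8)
The plan is to mirror the proofs of \Cref{cor:kroundlower} and \Cref{cor:k-1lower}: exhibit a family $\mathcal F$ of admissible input graphs that is closed under connected lifts and contains the hard instance underlying \Cref{thm:lower}, then invoke the transfer theorem of~\cite{goeoes13lower} to convert a hypothetical algorithm with unique identifiers into a port-numbering algorithm of the same (constant) round complexity, contradicting \Cref{thm:lower}. The key point is that only the single graph $H$ needs to be a hard instance; the remaining members of $\mathcal F$ serve merely as ``padding'' that makes the class both lift-closed and admissible.

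Concretely, fix $k$, write $f:=f(k)$, and let $\Delta\ge 3$ be the largest integer with $\Delta(\Delta-1)^k/2\le f$, so that \Cref{lem:bipartite} and \Cref{cor:bipartite} provide a balanced bipartite, properly $\Delta$-edge-colorable graph $H$ of uniform degree $\Delta$, girth at least $4k+3$, $k$-hop expansion at most $f$, and minimum $k$-hop dominating set of size $O(|V(H)|\Delta/f)$; as established in the proof of \Cref{thm:lower}, with port numbers induced by a $\Delta$-edge-coloring all views in $H$ are identical regardless of the number of rounds, forcing every port-numbering algorithm to output all of $V(H)$ and hence approximation ratio $\Omega(f^{k/(k+1)})$. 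First I would argue that $H$ may be taken connected: a connected component inherits uniform degree $\Delta$ and girth at least $4k+3$, is again bipartite (hence $\Delta$-edge-colorable by Hall's theorem, so the identical-views property persists), and since both $|V(\cdot)|$ and the size of a minimum $k$-hop MDS are additive over components, an averaging argument selects a component retaining the ratio $\Omega(f^{k/(k+1)})$. Let $\mathcal F$ be the family of all connected graphs of uniform degree $\Delta$ and girth at least $4k+3$. I would then check that $\mathcal F$ is closed under connected lifts: covering maps preserve vertex degrees, and any cycle of length $\ell$ in a lift projects to a \emph{non-backtracking} closed walk of length $\ell$ in the base (non-backtracking because a covering map restricts to an isomorphism on each star), while a non-backtracking closed walk of length $\ell$ in a simple graph contains a cycle of length at most $\ell$; hence the girth of a connected lift is at least that of the base. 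Finally, by \Cref{lem:contract} every graph of uniform degree $\Delta$ has $k$-hop expansion at most $\Delta(\Delta-1)^k/2\le f$, so every member of $\mathcal F$ is an admissible input, and $H\in\mathcal F$.

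With this in place, suppose for contradiction that some algorithm with running time depending only on $k$ and $f(k)$ computes an $o(f(k)^{k/(k+1)})$-approximate $k$-hop MDS on all graphs of girth at least $4k+3$ and $k$-hop expansion at most $f(k)$, even with unique identifiers. In particular it does so on every graph of $\mathcal F$, so by Theorem~1.4 of~\cite{goeoes13lower} --- applicable since $\mathcal F$ is closed under connected lifts --- there is a port-numbering algorithm of the same round complexity achieving the same approximation ratio on $\mathcal F$. Applied to the connected $H\in\mathcal F$ with the $\Delta$-edge-coloring port numbers, this algorithm must output all of $V(H)$ and therefore incurs approximation ratio $\Omega(f(k)^{k/(k+1)})$, a contradiction.

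The step I expect to be delicate is the invocation of~\cite{goeoes13lower}: one must confirm that $\beta$-approximate $k$-hop domination, restricted to the girth- and expansion-constrained, connected-lift-closed class $\mathcal F$, is exactly the kind of (locally, or PO-, checkable) optimization problem for which Theorem~1.4 transfers solvability from the identifier model to the port-numbering model; that the restriction to connected graphs does not erode the $\Omega(f(k)^{k/(k+1)})$ gap; and that the round complexity of the resulting port-numbering algorithm --- obtained by unfolding a port-numbered instance into a sufficiently large lift --- stays a function of $k$ and $f(k)$ rather than acquiring a hidden dependence on $n$. The purely graph-theoretic ingredients (preservation of degrees and non-decrease of girth under covers, and selecting a good connected component) are routine.
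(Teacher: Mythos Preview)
Your proposal is correct and follows precisely the template the paper uses for \Cref{cor:kroundlower} and \Cref{cor:k-1lower}; the paper itself gives no proof for this corollary beyond the one-line pointer to~\cite{goeoes13lower}, so you have in fact supplied the argument the paper omits. Your choice of lift-closed family (connected $\Delta$-regular graphs of girth at least $4k+3$, with \Cref{lem:contract} guaranteeing the expansion bound) and your handling of connectivity of $H$ are exactly the right moves, and your caveats about the applicability of Theorem~1.4 in~\cite{goeoes13lower} are appropriate but not obstacles.
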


\bibliographystyle{plainurl}
\bibliography{references}

\end{document}